\documentclass[a4paper,onecolumn,11pt,unpublished]{quantumarticle}
\pdfoutput=1

\usepackage[utf8]{inputenc}
\usepackage[english]{babel}
\usepackage[T1]{fontenc}
\usepackage[numbers]{natbib}
\usepackage{amsmath,amssymb,amsfonts,amsthm}
\usepackage{mathtools}
\usepackage{graphicx}
\usepackage{xcolor}
\usepackage{tikz}
\usepackage{pgfplots}
\usepackage{tikz-cd}
\usepackage{braket}
\usepackage{array}
\usepackage{booktabs}
\usepackage{enumitem}
\usepackage{longtable}
\usepackage{nicefrac}
\usepackage{stmaryrd}
\pdfmapfile{+stmaryrd.map}
\pdfmapfile{+rsfs.map}
\usepackage{hyperref}
\usepackage[nameinlink]{cleveref}

\usetikzlibrary{trees}
\usetikzlibrary{topaths}
\usetikzlibrary{decorations.pathmorphing}
\usetikzlibrary{decorations.markings}
\usetikzlibrary{matrix,backgrounds,folding}
\usetikzlibrary{chains,scopes,positioning,fit}
\usetikzlibrary{arrows,shadows}
\usetikzlibrary{calc}
\usetikzlibrary{shapes,shapes.geometric,shapes.misc}
\usetikzlibrary{decorations.pathreplacing}

\pgfdeclarelayer{edgelayer}
\pgfdeclarelayer{nodelayer}
\pgfsetlayers{background,edgelayer,nodelayer,main}
\pgfplotsset{compat=1.18}
\tikzstyle{none}=[inner sep=0mm]
\tikzstyle{every loop}=[]
\tikzstyle{env}=[copoint,regular polygon rotate=0,minimum width=0.2cm, fill=black]

\tikzstyle{hadamard edge}=[-,color=blue,dashed,dash pattern=on 2pt off 0.7pt]
\tikzstyle{black edge}=[-,color=black,dashed,dash pattern=on 2pt off 0.7pt]

\tikzstyle{every picture}=[baseline=-0.25em]
\tikzstyle{dotpic}=[scale=0.5]
\tikzstyle{diredges}=[every to/.style={diredge}]
\tikzstyle{dot graph}=[shorten <=-0.1mm,shorten >=-0.1mm,scale=0.6]
\tikzstyle{plot point}=[circle,fill=black,minimum width=2mm,inner sep=0]

\tikzstyle{braceedge}=[decorate,decoration={brace,amplitude=2mm,raise=-1mm}]
\tikzstyle{small braceedge}=[decorate,decoration={brace,amplitude=1mm,raise=-1mm}]
\tikzstyle{left hook arrow}=[left hook-latex]
\tikzstyle{right hook arrow}=[right hook-latex]

\tikzstyle{black dot}=[inner sep=0.7mm,minimum width=0pt,minimum height=0pt,fill=black,draw=black,shape=circle]

\tikzstyle{dot}=[black dot]
\tikzstyle{smalldot}=[inner sep=0.4mm,minimum width=0pt,minimum height=0pt,fill=black,draw=black,shape=circle]%
\tikzstyle{white dot}=[dot,fill=white]
\tikzstyle{antipode}=[white dot,inner sep=0.3mm,font=\footnotesize]
\tikzstyle{smallwhitedot}=[smalldot,fill=white]%
\tikzstyle{alt white dot}=[white dot,label={[xshift=3.07mm,yshift=-0.05mm,font=\footnotesize]left:$*$}]
\tikzstyle{gray dot}=[dot,fill=gray!40!white]
\tikzstyle{smallgraydot}=[smalldot,fill=gray!40!white]%
\tikzstyle{box vertex}=[draw=black,rectangle]
\tikzstyle{small box}=[box vertex,fill=white]%
\tikzstyle{whitebg}=[fill=white,inner sep=2pt]
\tikzstyle{graph state vertex}=[sg vertex,fill=black]

\tikzstyle{wide copoint}=[fill=white,draw=black,shape=isosceles triangle,shape border rotate=90,isosceles triangle stretches=true,inner sep=1pt,minimum width=1.5cm,minimum height=5mm]
\tikzstyle{wide point}=[fill=white,draw=black,shape=isosceles triangle,shape border rotate=-90,isosceles triangle stretches=true,inner sep=1pt,minimum width=1.5cm,minimum height=4mm]
\tikzstyle{very wide copoint}=[fill=white,draw=black,shape=isosceles triangle,shape border rotate=-90,isosceles triangle stretches=true,inner sep=1pt,minimum width=2.5cm,minimum height=4mm]
\tikzstyle{very wide empty copoint}=[draw=black,shape=isosceles triangle,shape border rotate=-90,isosceles triangle stretches=true,inner sep=1pt,minimum width=2.5cm,minimum height=4mm]
\tikzstyle{symm}=[ultra thick,shorten <=-1mm,shorten >=-1mm]

\tikzstyle{square box}=[rectangle,fill=white,draw=black,minimum height=5mm,minimum width=5mm,font=\small]
\tikzstyle{square gray box}=[rectangle,fill=gray!30,draw=black,minimum height=6mm,minimum width=6mm]
\tikzstyle{wide medium box}=[shape=rectangle, text height=1.5ex, text depth=0.25ex, yshift=0.5mm, fill=white, draw=black, minimum height=10mm, yshift=-0.5mm, minimum width=7.5mm, font={\small}]
\tikzstyle{copoint}=[regular polygon,regular polygon sides=3,draw=black,scale=0.75,inner sep=-0.5pt,minimum width=7mm,fill=white]
\tikzstyle{point}=[regular polygon,regular polygon sides=3,draw=black,scale=0.75,inner sep=-0.5pt,minimum width=7mm,fill=white,regular polygon rotate=180]
\tikzstyle{gray point}=[point,fill=gray!40!white]
\tikzstyle{gray copoint}=[copoint,fill=gray!40!white]
\tikzstyle{pointer edge}=[->, very thick, gray]

\newcommand{\edgearrow}{{\arrow[black]{>}}}
\newcommand{\edgetick}{{\arrow[black,scale=0.7,very thick]{|}}}

\tikzstyle{diredge}=[->]
\tikzstyle{rdiredge}=[<-]
\tikzstyle{medium diredge}=[->]

\tikzstyle{short diredge}=[->]
\tikzstyle{halfedge}=[-)]
\tikzstyle{other halfedge}=[(-]
\tikzstyle{freeedge}=[(-)]
\tikzstyle{white edge}=[line width=5pt,white]
\tikzstyle{tick}=[postaction=decorate,decoration={markings, mark=at position 0.5 with \edgetick}]
\tikzstyle{small map edge}=[|-latex, gray!60!blue, shorten <=0.9mm, shorten >=0.5mm]
\tikzstyle{thick dashed edge}=[very thick,dashed,gray!40]
\tikzstyle{map edge}=[|-latex,very thick, gray!40, shorten <=1mm, shorten >=0.5mm]
\tikzstyle{tickedge}=[postaction=decorate,
  decoration={markings, mark=at position 0.5 with \edgetick}]
\tikzstyle{dirtickedge}=[postaction=decorate,
  decoration={markings, mark=at position 0.5 with \edgetick},
  decoration={markings, mark=at position 0.85 with \edgearrow}]
\tikzstyle{dirdoubletickedge}=[postaction=decorate,
  decoration={markings, mark=at position 0.4 with \edgetick},
  decoration={markings, mark=at position 0.6 with \edgetick},
  decoration={markings, mark=at position 0.85 with \edgearrow}]

\makeatletter
\newcommand{\boxshape}[3]{%
\pgfdeclareshape{#1}{
\inheritsavedanchors[from=rectangle]%
\inheritanchorborder[from=rectangle]
\inheritanchor[from=rectangle]{center}
\inheritanchor[from=rectangle]{north}
\inheritanchor[from=rectangle]{south}
\inheritanchor[from=rectangle]{west}
\inheritanchor[from=rectangle]{east}
\backgroundpath{%
\southwest \pgf@xa=\pgf@x \pgf@ya=\pgf@y
\northeast \pgf@xb=\pgf@x \pgf@yb=\pgf@y

\@tempdima=#2
\@tempdimb=#3

\pgfpathmoveto{\pgfpoint{\pgf@xa - 5pt + \@tempdima}{\pgf@ya}}
\pgfpathlineto{\pgfpoint{\pgf@xa - 5pt - \@tempdima}{\pgf@yb}}
\pgfpathlineto{\pgfpoint{\pgf@xb + 5pt + \@tempdimb}{\pgf@yb}}
\pgfpathlineto{\pgfpoint{\pgf@xb + 5pt - \@tempdimb}{\pgf@ya}}
\pgfpathlineto{\pgfpoint{\pgf@xa - 5pt + \@tempdima}{\pgf@ya}}
\pgfpathclose
}
}}

\boxshape{NEbox}{0pt}{8pt}
\boxshape{SEbox}{0pt}{-8pt}
\boxshape{NWbox}{8pt}{0pt}
\boxshape{SWbox}{-8pt}{0pt}
\makeatother

\tikzstyle{map}=[draw,shape=NEbox,inner sep=7pt]
\tikzstyle{mapdag}=[draw,shape=SEbox,inner sep=7pt]
\tikzstyle{maptrans}=[draw,shape=SWbox,inner sep=7pt]
\tikzstyle{mapconj}=[draw,shape=NWbox,inner sep=7pt]

\tikzstyle{probs}=[shape=semicircle,fill=gray!40!white,draw=black,shape border rotate=180,minimum width=1.2cm]

\tikzstyle{arrs}=[-latex,font=\small,auto]
\tikzstyle{arrow plain}=[arrs]
\tikzstyle{arrow dashed}=[dashed,arrs]
\tikzstyle{arrow bold}=[very thick,arrs]
\tikzstyle{arrow hide}=[draw=white!0,-]
\tikzstyle{arrow reverse}=[latex-]
\tikzstyle{cdnode}=[]

\tikzstyle{gn}=[dot,fill=lime!50,minimum width=0.2cm,inner sep=0.5pt,font=\footnotesize]
\tikzstyle{rn}=[dot,fill=red!50,inner sep=0.5pt,minimum width=0.2cm,font=\footnotesize]
\tikzstyle{bn}=[dot,fill=blue,minimum width=0.3cm]

\tikzstyle{rc}=[dot,thick,fill=white,draw = red,minimum width=0.2cm,inner sep=0.5pt,font=\footnotesize]
\tikzstyle{gc}=[dot,thick,fill=white,draw= lime,inner sep=0.5pt,minimum width=0.2cm,font=\footnotesize]
\tikzstyle{bc}=[dot,thick,fill=white,draw= blue,minimum width=0.3cm]

\tikzstyle{label}=[circle,fill=white,minimum width=0.3cm]

\tikzstyle{H box}=[rectangle,fill=yellow,draw=black,xscale=1,yscale=1,font=\small,inner sep=0.75pt,minimum width=0.15cm,minimum height=0.15cm]

\tikzstyle{clocklabel}=[dot,fill=yellow,draw=black,font=\tiny,inner sep=0.75pt]

\tikzstyle{rsn}=[circle split,draw,fill=red,font=\tiny,inner sep=0.75pt]
\tikzstyle{gsn}=[circle split,draw,fill=lime,font=\tiny,inner sep=0.75pt]
\tikzstyle{bsn}=[circle split,draw,fill=blue,font=\tiny,inner sep=0.75pt]

\tikzstyle{rsc}=[circle split,thick,draw= red,draw,fill=white,font=\tiny,inner sep=0.75pt]
\tikzstyle{gsc}=[circle split,thick,draw= lime,draw,fill=white,font=\tiny,inner sep=0.75pt]
\tikzstyle{bsc}=[circle split,thick,draw= blue,draw,fill=white,font=\tiny,inner sep=0.75pt]

\tikzstyle{cnot}=[fill=white,shape=circle,inner sep=-1.4pt]
\tikzstyle{wire label}=[font=\tiny, auto]

\tikzstyle{Z dot}=[inner sep=0mm,minimum size=2mm,shape=circle,draw=black,
  fill={rgb,255:red,223;green,255;blue,128},line width=.4pt]
\tikzstyle{X dot}=[Z dot,
  fill={rgb,255:red,255;green,128;blue,128}]
\tikzstyle{Z phase dot}=[inner sep=0mm,minimum size=5.47mm,shape=circle,
  draw=black,fill={rgb,255:red,223;green,255;blue,128},line width=.4pt]
\tikzstyle{X phase dot}=[Z phase dot,
  fill={rgb,255:red,255;green,128;blue,128}]
\tikzstyle{hadamard}=[fill=yellow,draw=black,shape=rectangle,inner sep=0.6mm,
  minimum height=1.5mm,minimum width=1.5mm,line width=.4pt]
\tikzstyle{empty diagram}=[draw={rgb,255:red,170;green,170;blue,170},dashed,
  shape=rectangle,inner sep=0mm,minimum height=5mm,minimum width=5mm,line width=.4pt]

\tikzstyle{tikzfig}=[baseline=-0.25em,scale=0.5]

\newcommand{\widetikzitinput}[1]{%
	\begin{center}
	\makebox[\linewidth][c]{%
	\begingroup
	\scalebox{0.8}{{\tikzstyle{every picture}=[tikzfig]\input{#1}}}%
	\endgroup
}%
	\end{center}
}

\newcommand{\intf}[1]{\ensuremath{\left\llbracket #1 \right\rrbracket}}
\newcommand{\F}{\mathbb F}
\newcommand{\Z}{\mathbb Z}
\DeclareRobustCommand{\ZXsimp}{\ensuremath{\text{\bfseries\upshape ZX}_{\text{\sffamily\bfseries\upshape simp}}}}
\newcommand{\id}{\operatorname{id}}
\newcommand{\Hom}{\operatorname{Hom}}
\newcommand{\diag}{\operatorname{diag}}
\newcommand{\daggerfootnote}[1]{%
	\begingroup
	\renewcommand{\thefootnote}{\ensuremath{\dagger}}%
	\footnote{#1}%
	\endgroup
}
\newcommand{\ddaggerfootnote}[1]{%
	\begingroup
	\renewcommand{\thefootnote}{\ensuremath{\ddagger}}%
	\footnote{#1}%
	\endgroup
}

\newcommand{\semzx}[1]{%
  \left\llbracket\vbox{\hbox{%
\begingroup%
\tikzstyle{every picture}=[tikzfig]%
\input{diagrams/#1.tikz}%
\endgroup%
}}\right\rrbracket^{(B2')}%
}
\newcommand{\ocross}{\otimes}
\newcommand{\sOneLhsDiagram}{\vcenter{\hbox{%
\begin{tikzpicture}[font={\footnotesize}]
	\begin{pgfonlayer}{nodelayer}
		\node [style=gn] (3) at (-0.75, -0.25) {\footnotesize$~\ell~$};
		\node [style=none] (4) at (-1.75, -0.5) {\raisebox{2mm}{...}};
		\node [style=none] (6) at (-1, -0.75) {};
		\node [style=none] (8) at (-0.5, -0.75) {};
		\node [style=none] (10) at (-2, -0.5) {};
		\node [style=none] (11) at (-1.5, -0.5) {};
		\node [style=none] (12) at (-0.75, -0.75) {\raisebox{2mm}{...}};
		\node [style=none] (15) at (-2, 0.75) {};
		\node [style=none] (16) at (-0.5, 0.5) {};
		\node [style=none] (17) at (-1.5, 0.75) {};
		\node [style=gn] (19) at (-1.75, 0.25) {\footnotesize$~k~$};
		\node [style=none] (20) at (-0.75, 0.5) {\raisebox{-2mm}{...}};
		\node [style=none] (21) at (-1, 0.5) {};
		\node [style=none] (22) at (-1.75, 0.75) {\raisebox{-2mm}{...}};
	\end{pgfonlayer}
	\begin{pgfonlayer}{edgelayer}
		\draw[bend right=23] (3) to (16.center);
		\draw[bend right=23] (3) to (6.center);
		\draw[bend left=23] (3) to (8.center);
		\draw[bend right=23] (19) to (10.center);
		\draw[bend left=23] (19) to (11.center);
		\draw (19) to (3);
		\draw[bend left=23]  (19) to (15.center);
		\draw[bend right=23]  (19) to (17.center);
		\draw[bend left=23] (3) to (21.center);
	\end{pgfonlayer}
\end{tikzpicture}}}}
\newcommand{\sOneRhsDiagram}{\vcenter{\hbox{%
\begin{tikzpicture}[font={\footnotesize}]
	\begin{pgfonlayer}{nodelayer}
		\node [style=gn] (z) at (0, 0) {\footnotesize$k+\ell$};
		\node [style=none] (t1) at (-0.5, 0.75) {};
		\node [style=none] (t2) at (0.5, 0.75) {};
		\node [style=none] (td) at (0, 0.75) {\raisebox{-2mm}{...}};
		\node [style=none] (b1) at (-0.5, -0.75) {};
		\node [style=none] (b2) at (0.5, -0.75) {};
		\node [style=none] (bd) at (0, -0.75) {\raisebox{2mm}{...}};
	\end{pgfonlayer}
	\begin{pgfonlayer}{edgelayer}
		\draw[bend left=23] (z) to (t1.center);
		\draw[bend right=23] (z) to (t2.center);
		\draw[bend right=23] (z) to (b1.center);
		\draw[bend left=23] (z) to (b2.center);
	\end{pgfonlayer}
\end{tikzpicture}}}}
\newcommand{\cupDiagram}{\vcenter{\hbox{%
\begin{tikzpicture}
	\begin{pgfonlayer}{nodelayer}
		\node [style=none] (l) at (0, -0.25) {};
		\node [style=none] (r) at (1, -0.25) {};
	\end{pgfonlayer}
	\begin{pgfonlayer}{edgelayer}
		\draw [in=90, out=90, looseness=1.75] (l.center) to (r.center);
	\end{pgfonlayer}
\end{tikzpicture}}}}
\newcommand{\greenCupDiagram}{\vcenter{\hbox{%
\begin{tikzpicture}
	\begin{pgfonlayer}{nodelayer}
		\node [style=none] (l) at (0, -0.25) {};
		\node [style=none] (r) at (1, -0.25) {};
		\node [style=gn] (g) at (0.5, 0.25) {};
	\end{pgfonlayer}
	\begin{pgfonlayer}{edgelayer}
		\draw [in=90, out=90, looseness=1.75] (l.center) to (r.center);
	\end{pgfonlayer}
\end{tikzpicture}}}}
\newcommand{\redCupDiagram}{\vcenter{\hbox{%
\begin{tikzpicture}
	\begin{pgfonlayer}{nodelayer}
		\node [style=none] (l) at (0, -0.25) {};
		\node [style=none] (r) at (1, -0.25) {};
		\node [style=rn] (x) at (0.5, 0.25) {};
	\end{pgfonlayer}
	\begin{pgfonlayer}{edgelayer}
		\draw [in=90, out=90, looseness=1.75] (l.center) to (r.center);
	\end{pgfonlayer}
\end{tikzpicture}}}}
\newcommand{\bOneLhsDiagram}{\vcenter{\hbox{%
\begin{tikzpicture}
	\begin{pgfonlayer}{nodelayer}
		\node [style=rn] (r1) at (0, 0.35) {};
		\node [style=gn] (g1) at (0, -0.25) {};
		\node [style=rn] (r2) at (0.85, 0.35) {};
		\node [style=gn] (g2) at (0.85, -0.25) {};
		\node [style=none] (a) at (0.6, -0.8) {};
		\node [style=none] (b) at (1.1, -0.8) {};
	\end{pgfonlayer}
	\begin{pgfonlayer}{edgelayer}
		\draw (r1) to (g1);
		\draw (r2) to (g2);
		\draw[bend right=23] (g2) to (a.center);
		\draw[bend left=23] (g2) to (b.center);
	\end{pgfonlayer}
\end{tikzpicture}}}}
\newcommand{\bOneRhsDiagram}{\vcenter{\hbox{%
\begin{tikzpicture}
	\begin{pgfonlayer}{nodelayer}
		\node [style=rn] (r1) at (0, 0.25) {};
		\node [style=rn] (r2) at (0.55, 0.25) {};
		\node [style=none] (b1) at (0, -0.45) {};
		\node [style=none] (b2) at (0.55, -0.45) {};
	\end{pgfonlayer}
	\begin{pgfonlayer}{edgelayer}
		\draw (r1) to (b1.center);
		\draw (r2) to (b2.center);
	\end{pgfonlayer}
\end{tikzpicture}}}}
\newcommand{\zOLhsDiagram}{\vcenter{\hbox{%
\begin{tikzpicture}
	\begin{pgfonlayer}{nodelayer}
		\node [style=gn] (pi) at (0, 0) {$\pi$};
		\node [style=gn] (g) at (0.7, 0.25) {};
		\node [style=none] (out) at (0.7, -0.25) {};
	\end{pgfonlayer}
	\begin{pgfonlayer}{edgelayer}
		\draw (g) to (out.center);
	\end{pgfonlayer}
\end{tikzpicture}}}}
\newcommand{\zORhsDiagram}{\vcenter{\hbox{%
\begin{tikzpicture}
	\begin{pgfonlayer}{nodelayer}
		\node [style=gn] (pi) at (0, 0) {$\pi$};
		\node [style=rn] (r) at (0.7, 0.25) {};
		\node [style=none] (out) at (0.7, -0.25) {};
	\end{pgfonlayer}
	\begin{pgfonlayer}{edgelayer}
		\draw (r) to (out.center);
	\end{pgfonlayer}
\end{tikzpicture}}}}
\newcommand{\hLhsDiagram}{\vcenter{\hbox{%
\begin{tikzpicture}
	\begin{pgfonlayer}{nodelayer}
		\node [style=rn] (x) at (0, 0) {\footnotesize$k$};
		\node [style={H box}] (h1) at (-0.45, 0.6) {};
		\node [style={H box}] (h2) at (0.45, 0.6) {};
		\node [style={H box}] (h3) at (-0.45, -0.6) {};
		\node [style={H box}] (h4) at (0.45, -0.6) {};
		\node [style=none] (t1) at (-0.45, 1.0) {};
		\node [style=none] (t2) at (0.45, 1.0) {};
		\node [style=none] (b1) at (-0.45, -1.0) {};
		\node [style=none] (b2) at (0.45, -1.0) {};
		\node [style=none] (td) at (0, 1.0) {\raisebox{-2mm}{...}};
		\node [style=none] (bd) at (0, -1.0) {\raisebox{2mm}{...}};
	\end{pgfonlayer}
	\begin{pgfonlayer}{edgelayer}
		\draw[bend left=23] (x) to (h1);
		\draw[bend right=23] (x) to (h2);
		\draw[bend right=23] (x) to (h3);
		\draw[bend left=23] (x) to (h4);
		\draw (h1) to (t1.center);
		\draw (h2) to (t2.center);
		\draw (h3) to (b1.center);
		\draw (h4) to (b2.center);
	\end{pgfonlayer}
\end{tikzpicture}}}}
\newcommand{\hRhsDiagram}{\vcenter{\hbox{%
\begin{tikzpicture}
	\begin{pgfonlayer}{nodelayer}
		\node [style=gn] (z) at (0, 0) {\footnotesize$k$};
		\node [style=none] (t1) at (-0.45, 0.75) {};
		\node [style=none] (t2) at (0.45, 0.75) {};
		\node [style=none] (b1) at (-0.45, -0.75) {};
		\node [style=none] (b2) at (0.45, -0.75) {};
		\node [style=none] (td) at (0, 0.75) {\raisebox{-2mm}{...}};
		\node [style=none] (bd) at (0, -0.75) {\raisebox{2mm}{...}};
	\end{pgfonlayer}
	\begin{pgfonlayer}{edgelayer}
		\draw[bend left=23] (z) to (t1.center);
		\draw[bend right=23] (z) to (t2.center);
		\draw[bend right=23] (z) to (b1.center);
		\draw[bend left=23] (z) to (b2.center);
	\end{pgfonlayer}
\end{tikzpicture}}}}
\newcommand{\ivLhsDiagram}{\vcenter{\hbox{%
\begin{tikzpicture}
	\begin{pgfonlayer}{nodelayer}
		\node [style=gn] (s) at (-0.75, 0) {};
		\node [style=rn] (r1) at (-0.25, 0.45) {};
		\node [style=gn] (g1) at (-0.25, -0.35) {};
		\node [style=rn] (r2) at (0.35, 0.45) {};
		\node [style=gn] (g2) at (0.35, -0.35) {};
	\end{pgfonlayer}
	\begin{pgfonlayer}{edgelayer}
		\draw (r1) to (g1);
		\draw[bend left=38] (r1) to (g1);
		\draw[bend right=38] (r1) to (g1);
		\draw (r2) to (g2);
		\draw[bend left=38] (r2) to (g2);
		\draw[bend right=38] (r2) to (g2);
	\end{pgfonlayer}
\end{tikzpicture}}}}
\newcommand{\ivRhsDiagram}{\vcenter{\hbox{\begin{tikzpicture}
	\begin{pgfonlayer}{nodelayer}
		\node [style=none] (4) at (0.25, 0.25) {};
		\node [style=none] (5) at (-0.25, 0.25) {};
		\node [style=none] (6) at (0.25, -0.25) {};
		\node [style=none] (7) at (-0.25, -0.25) {};
	\end{pgfonlayer}
	\begin{pgfonlayer}{edgelayer}
		\draw [dashed, color=gray] (5.center) to (7.center);
		\draw [dashed, color=gray] (7.center) to (6.center);
		\draw [dashed, color=gray] (6.center) to (4.center);
		\draw [dashed, color=gray] (4.center) to (5.center);
	\end{pgfonlayer}
\end{tikzpicture}}}}
\newcommand{\euLhsDiagram}{\vcenter{\hbox{\begin{tikzpicture}
	\begin{pgfonlayer}{nodelayer}
		\node [style={H box}] (0) at (0.5, 0) {};
		\node [style=none] (1) at (0.5, 0.3) {};
		\node [style=none] (2) at (0.5, -0.3) {};
				\node [style=none] (a) at (0.5, 0.5) {};
		\node [style=none] (b) at (0.5, -0.5) {};
	\end{pgfonlayer}
	\begin{pgfonlayer}{edgelayer}
		\draw (1.center) to (0);
		\draw (2.center) to (0);
	\end{pgfonlayer}
\end{tikzpicture}}}}
\newcommand{\euRhsDiagram}{\vcenter{\hbox{%
\begin{tikzpicture}
	\begin{pgfonlayer}{nodelayer}
		\node [style=none] (top) at (0, 1.0) {};
		\node [style=gn] (g1) at (0, 0.55) {\footnotesize$1$};
		\node [style=rn] (x) at (0, 0.1) {};
		\node [style=gn] (g2) at (0, -0.35) {\footnotesize$1$};
		\node [style=none] (bot) at (0, -0.85) {};
		\node [style=gn] (g3) at (0.55, 0.35) {\footnotesize$-1$};
	\end{pgfonlayer}
	\begin{pgfonlayer}{edgelayer}
		\draw (top.center) to (g1);
		\draw (g1) to (x);
		\draw (x) to (g2);
		\draw (g2) to (bot.center);
		\draw (g3) to (x);
	\end{pgfonlayer}
\end{tikzpicture}}}}
\newcommand{\bTwoLhsDiagram}{\vcenter{\hbox{%
\begin{tikzpicture}
	\begin{pgfonlayer}{nodelayer}
		\node [style=rn] (1) at (0.5, -0.25) {};
		\node [style=none] (3) at (1.25, 1) {};
		\node [style=none] (5) at (0.5, -0.75) {};
		\node [style=none] (6) at (0.5, 1) {};
		\node [style=gn] (7) at (1.25, 0.5) {};
		\node [style=rn] (9) at (1.25, -0.25) {};
		\node [style=gn] (11) at (0.5, 0.5) {};
		\node [style=none] (12) at (1.25, -0.75) {};
		\node [style=gn] (17) at (0, 0) {};
	\end{pgfonlayer}
	\begin{pgfonlayer}{edgelayer}
		\draw [style=none] (12.center) to (9);
		\draw [style=none] (5.center) to (1);
		\draw [style=none] (7) to (3.center);
		\draw [style=none, bend right=23, looseness=1.00] (9) to (7);
		\draw [style=none] (11) to (6.center);
		\draw [style=none, bend left=23, looseness=1.00] (1) to (11);
		\draw (11) to (9);
		\draw (7) to (1);
	\end{pgfonlayer}
\end{tikzpicture}}}}
\newcommand{\bTwoRhsDiagram}{\vcenter{\hbox{%
\begin{tikzpicture}
	\begin{pgfonlayer}{nodelayer}
		\node [style=none] (0) at (0.75, 0.75) {};
		\node [style=none] (2) at (0.25, -0.75) {};
		\node [style=none] (4) at (0.25, 0.75) {};
		\node [style=gn] (10) at (0.5, -0.25) {};
		\node [style=none] (13) at (0.75, -0.75) {};
		\node [style=rn] (14) at (0.5, 0.25) {};
		\node [style=rn] (15) at (0, 0.25) {};
		\node [style=gn] (16) at (0, -0.25) {};
	\end{pgfonlayer}
	\begin{pgfonlayer}{edgelayer}
		\draw [style=none, bend right=23, looseness=1.00] (13.center) to (10);
		\draw [style=none] (10) to (14);
		\draw [style=none, bend left=23, looseness=1.00] (14) to (4.center);
		\draw [style=none, bend right=23, looseness=1.00] (14) to (0.center);
		\draw [bend right=23, looseness=1.00] (10) to (2.center);
		\draw (15) to (16);
	\end{pgfonlayer}
\end{tikzpicture}}}}

\newtheorem{theorem}{Theorem}[section]
\newtheorem{lemma}[theorem]{Lemma}
\newtheorem{corollary}[theorem]{Corollary}
\theoremstyle{definition}
\newtheorem{definition}[theorem]{Definition}

\title{Minimality of the stabilizer ZX calculus}
\author{Harry K. Stoltz}
\orcid{0009-0007-3805-5276}
\email{harry.stoltz@nyu.edu}
\affiliation{Courant Institute of Mathematical Sciences, New York University}

\begin{document}

\maketitle

\begin{abstract}
The stabilizer fragment of the ZX calculus is amongst the most important fragments of the theory.  Crucially, the stabilizer calculus can be described by a small collection of rewrites, most of which have been shown to be necessary by Backens--Perdrix--Wang \cite{backens2020towards}.  However, two rules, describing the red/green compact-structure coincidence and the important bialgebra law, had not been shown to be necessary.  We present two alternate interpretations, showing that both of these rules are individually necessary.  Thus, this resolves a problem which has remained open for over a decade, and gives the first complete, minimal ruleset for the stabilizer ZX Calculus.
\end{abstract}

\section*{Summary}

Mathematical calculi for quantum circuits often contain many rewrite rules, and a natural question is whether any of those rules are redundant. This paper studies a small complete rule set for the stabilizer fragment of the ZX calculus and proves that the two rules whose necessity was previously unresolved are each independent from the rest, assuming the usual connectivity-only meta-rule. Crucially, the bialgebra law is proven to be a necessary rewrite rule.

\section{Introduction}
\label{sec:introduction}

The ZX calculus is a graphical language for reasoning about quantum computation \cite{coecke2011interacting}, arising out of the study of Categorical Quantum Mechanics (CQM) \cite{abramsky2004categorical}.  The stabilizer formalism \cite{gottesman1997stabilizercodesquantumerror} is central in quantum error correction \cite{nielsen2010quantum} and measurement-based quantum computation \cite{raussendorf2001one}, and the ZX calculus has also recently been applied to photonic quantum computing \cite{de2022quantum}.  The ZX calculus and CQM have also found their place as a means of communicating quantum mechanical ideas to a broader audience, without the need for matrix formalism \cite{dundar2023quantum}.  We refer the interested reader to the following survey by van de Wetering for an overview of the ZX calculus \cite{van2020zx}, and the books by Coecke--Kissinger \cite{coecke2018picturing}, or Heunen--Vicary \cite{heunen2019categories} for a categorical introduction to CQM.

The stabilizer fragment of the ZX calculus is an important cornerstone of the broader theory, both historically and practically. The stabilizer fragment restricts spider phases to integer multiples of \(\pi/2\), and was the first version of the calculus to be found complete \cite{backens2014zx}. Wider completeness results have since been established \cite{ng2017universal}, and the closely related Clifford+\(T\) fragment, with phases in integer multiples of \(\pi/4\), is approximately universal. However, the stabilizer fragment remains an important object of study.  The rule set for the stabilizer ZX calculus also captures, in some sense, some of the most fundamental quantum properties among its rules.

Backens et al. \cite{backens2020towards} established a significantly smaller complete rule set for the stabilizer ZX calculus with just nine rules (Table 1) and one meta-rule “only connectivity matters.”  Previous rule sets usually contained more explicit rules and conventions allowing for flipping of diagrams or switching colors, whereas this rule set allowed for their derivation instead.  Backens et al. proved that seven of the nine rules were necessary (that is, they could not be derived from the other rules) for the overall rule set to be complete, and amongst the remaining two rules, (S3’R) and (B2’), at least one was necessary.  The rule (S3’R) essentially says that the ‘red’ compact structure coincides with the ‘green’ one (which coincides with the ambient compact structure of the category by (S3’L)).  It’s interesting that proving independence of the (B2’) bialgebra rule was so elusive, since this rule describes an important feature of the ZX calculus: complementarity.  Notably absent from the \(\ZXsimp\) rule set is the Hopf rule, which lets one ‘chop off’ double wires connecting red and green spiders.  This is  usually derived from the bialgebra law (or vice versa) so it would be surprising to be able to derive either of these rules from the remaining axioms.

In this paper, we prove that both of the remaining rules, (S3’R) and (B2’) are, in fact, individually necessary relative to the connectivity meta-rule of Backens--Perdrix--Wang \cite{backens2020towards}.  This means that \(\ZXsimp\) + the meta-rule has no redundant rewrite rule.  This also reinforces that the complementarity property encoded by the bialgebra law, and the compact structure induced by the ‘red’ spider are distinct properties from each other, as well as the rest of the rule set.

\section{The Stabilizer ZX Calculus}
\label{sec:stabilizer-zx-calculus}

We now describe the stabilizer ZX calculus in more depth, and outline some of the conventions we use.  In this paper, we mirror the setup of Backens et al. \cite{backens2020towards} for continuity.

In the stabilizer ZX calculus, a diagram \(D\) with \(k\) inputs and \(l\) outputs  \(D:k\to l\) is composed of the following generators:

\begingroup
\footnotesize
\setlength{\tabcolsep}{4pt}
\renewcommand{\arraystretch}{1.35}
\begin{center}
\begin{tabular}{|>{\centering\arraybackslash}p{0.23\textwidth}|>{\centering\arraybackslash}p{0.33\textwidth}|>{\centering\arraybackslash}p{0.34\textwidth}|}
\hline
Name & Diagram & Typed\\
\hline
Green Spider & \(\vcenter{\hbox{\begin{tikzpicture}\footnotesize
	\begin{pgfonlayer}{nodelayer}
		\node [style=none] (0) at (-0.5, 0.5) {};
		\node [style=none] (1) at (-1, 0.5) {};
		\node [style=none] (2) at (0, 0.45) {...};
		\node [style=none] (3) at (0, -0.45) {...};
		\node [style=none] (4) at (0.5, -0.5) {};
		\node [style=none] (5) at (0.75, 0.75) {};
		\node [style=none] (6) at (0.75, -0.75) {};
		\node [style=gn] (7) at (-0.25, -0) {$~\alpha~$};
		\node [style=none] (8) at (-1.25, 0.75) {};
		\node [style=none] (9) at (-1.25, -0.75) {};
		\node [style=none] (10) at (-1, -0.5) {};
		\node [style=none] (11) at (0.5, 0.5) {};
		\node [style=none] (12) at (-0.5, -0.5) {};
	\end{pgfonlayer}
	\begin{pgfonlayer}{edgelayer}
		\draw [style=braceedge] (6.center) to node[wire label, inner sep={5 pt}]{$m$} (9.center);
		\draw [style=none, bend left=15, looseness=1.00] (10.center) to (7);
		\draw [style=none, bend left=15, looseness=1.00] (12.center) to (7);
		\draw [style=braceedge] (8.center) to node[wire label, inner sep={5 pt}]{$n$} (5.center);
		\draw [style=none, bend left=15, looseness=1.00] (7) to (1.center);
		\draw [style=none, bend right=15, looseness=1.00] (4.center) to (7);
		\draw [style=none, bend right=15, looseness=1.00] (7) to (11.center);
		\draw [style=none, bend left=15, looseness=1.00] (7) to (0.center);
	\end{pgfonlayer}
\end{tikzpicture}}}\) & \(Z^\alpha_{n,m}:n\to m\)\\[0.5em]
\hline
Red Spider & \(\vcenter{\hbox{\begin{tikzpicture}\footnotesize
	\begin{pgfonlayer}{nodelayer}
		\node [style=none] (0) at (1.25, -0.5) {};
		\node [style=none] (1) at (2.25, 0.5) {};
		\node [style=none] (2) at (0.5, -0.75) {};
		\node [style=none] (3) at (0.75, 0.5) {};
		\node [style=none] (4) at (0.5, 0.75) {};
		\node [style=rn] (5) at (1.5, 0) {$~\alpha~$};
		\node [style=none] (6) at (2.5, -0.75) {};
		\node [style=none] (7) at (1.75, -0.45) {...};
		\node [style=none] (8) at (0.75, -0.5) {};
		\node [style=none] (9) at (2.25, -0.5) {};
		\node [style=none] (10) at (1.25, 0.5) {};
		\node [style=none] (11) at (1.75, 0.45) {...};
		\node [style=none] (12) at (2.5, 0.75) {};
	\end{pgfonlayer}
	\begin{pgfonlayer}{edgelayer}
		\draw [style=braceedge] (6.center) to node[wire label, inner sep={5 pt}]{$m$} (2.center);
		\draw [style=none, bend left=15, looseness=1.00] (8.center) to (5);
		\draw [style=none, bend left=15, looseness=1.00] (0.center) to (5);
		\draw [style=braceedge] (4.center) to node[wire label, inner sep={5 pt}]{$n$} (12.center);
		\draw [style=none, bend left=15, looseness=1.00] (5) to (3.center);
		\draw [style=none, bend right=15, looseness=1.00] (9.center) to (5);
		\draw [style=none, bend right=15, looseness=1.00] (5) to (1.center);
		\draw [style=none, bend left=15, looseness=1.00] (5) to (10.center);
	\end{pgfonlayer}
\end{tikzpicture}}}\) & \(X^\alpha_{n,m}:n\to m\)\\[0.5em]
\hline
Hadamard & \(\vcenter{\hbox{\begin{tikzpicture}
	\begin{pgfonlayer}{nodelayer}
		\node [style={H box}] (0) at (0.5, 0) {};
		\node [style=none] (1) at (0.5, 0.3) {};
		\node [style=none] (2) at (0.5, -0.3) {};
				\node [style=none] (a) at (0.5, 0.5) {};
		\node [style=none] (b) at (0.5, -0.5) {};
	\end{pgfonlayer}
	\begin{pgfonlayer}{edgelayer}
		\draw (1.center) to (0);
		\draw (2.center) to (0);
	\end{pgfonlayer}
\end{tikzpicture}}}\) & \(H:1\to 1\)\\[0.5em]
\hline
Identity & \(\vcenter{\hbox{\begin{tikzpicture}
	\begin{pgfonlayer}{nodelayer}
		\node [style=none] (1) at (0.5, 0.3) {};
		\node [style=none] (2) at (0.5, -0.3) {};
	\end{pgfonlayer}
	\begin{pgfonlayer}{edgelayer}
		\draw (1.center) to (2.center);
	\end{pgfonlayer}
\end{tikzpicture}
}}\) & \(1_1:1\to 1\)\\[0.5em]
\hline
Braiding & \(\vcenter{\hbox{\begin{tikzpicture}
	\begin{pgfonlayer}{nodelayer}
		\node [style=none] (0) at (0, 0) {};
		\node [style=none] (1) at (0.3, 0.3) {};
		\node [style=none] (2) at (0.3, -0.3) {};
		\node [style=none] (a) at (-0.3, 0.3) {};
		\node [style=none] (b) at (-0.3, -0.3) {};
		\node [style=none] (3) at (0.3, -0.5) {};
		\node [style=none] (4) at (0.3, 0.5) {};
	\end{pgfonlayer}
	\begin{pgfonlayer}{edgelayer}
		\draw[bend left=35] (1.center) to (0.center);
		\draw[bend right=35] (2.center) to (0.center);
		\draw[bend right=35] (a.center) to (0.center);
		\draw[bend left=35] (b.center) to (0.center);
	\end{pgfonlayer}
\end{tikzpicture}}}\) & \(\sigma:2\to 2\)\\[0.5em]
\hline
Cup & \(\vcenter{\hbox{\begin{tikzpicture}
	\begin{pgfonlayer}{nodelayer}
		\node [style=none] (0) at (0, 0) {};
		\node [style=none] (1) at (0.3, 0.3) {};
		\node [style=none] (a) at (-0.3, 0.3) {};
	\end{pgfonlayer}
	\begin{pgfonlayer}{edgelayer}
		\draw[bend left=45] (1.center) to (0.center);
		\draw[bend right=45] (a.center) to (0.center);
	\end{pgfonlayer}
\end{tikzpicture}
}}\) & \(\eta:0\to 2\)\\[0.5em]
\hline
Cap & \(\vcenter{\hbox{\begin{tikzpicture}
	\begin{pgfonlayer}{nodelayer}
		\node [style=none] (0) at (0, 0) {};
		\node [style=none] (1) at (-0.3, -0.3) {};
		\node [style=none] (a) at (0.3, -0.3) {};
		\node [style=none] (3) at (0.2, -0.5) {};
		\node [style=none] (4) at (0.2, 0.5) {};
	\end{pgfonlayer}
	\begin{pgfonlayer}{edgelayer}
		\draw[bend left=45] (1.center) to (0.center);
		\draw[bend right=45] (a.center) to (0.center);
	\end{pgfonlayer}
\end{tikzpicture}}}\) & \(\epsilon:2\to 0\)\\[0.5em]
\hline
Unit (empty diagram) & \(\vcenter{\hbox{\begin{tikzpicture}
	\begin{pgfonlayer}{nodelayer}
		\node [style=none] (4) at (0.25, 0.25) {};
		\node [style=none] (5) at (-0.25, 0.25) {};
		\node [style=none] (6) at (0.25, -0.25) {};
		\node [style=none] (7) at (-0.25, -0.25) {};
	\end{pgfonlayer}
	\begin{pgfonlayer}{edgelayer}
		\draw [dashed, color=gray] (5.center) to (7.center);
		\draw [dashed, color=gray] (7.center) to (6.center);
		\draw [dashed, color=gray] (6.center) to (4.center);
		\draw [dashed, color=gray] (4.center) to (5.center);
	\end{pgfonlayer}
\end{tikzpicture}}}\) & \(e:0\to 0\)\\
\hline
\end{tabular}
\end{center}
\endgroup

We can compose diagrams in two ways:

\begin{itemize}
\item \textbf{Sequentially}: If \(D_1:a \to b\) and \(D_2:b \to c\) are two diagrams, we can compose \(D_2 \circ D_1 : a \to c\) by placing \(D_1\) above \(D_2\).  We can think of this new process as performing the process \(D_1\) first, and then applying the process \(D_2\) next.

\widetikzitinput{"resources/Sequential.tikz"}

\item \textbf{Spatially}: If \(D_1:a\to b\) and \(D_2:c \to d\) are two diagrams, we can compose \(D_1 \ocross D_2: a+c \to b+d\) by placing \(D_1\) to the left of \(D_2\).  We can think of this new process as performing the processes \(D_1\) and \(D_2\) side-by-side.

\widetikzitinput{"resources/Spacial.tikz"}
\end{itemize}

These operations satisfy the coherence conditions for a strict monoidal category.  See \cite{heunen2019categories} for details.

Importantly, we can define the \textbf{standard interpretation functor} \cite{coecke2011interacting}:

For any ZX diagram \(D:n\to m\), \(\intf{D}:(\mathbb C^2)^{\otimes n}\to(\mathbb C^2)^{\otimes m}\) is defined inductively in the following way:

\[
\intf{D_1\otimes D_2}
=\intf{D_1}\otimes\intf{D_2},
\quad
\intf{D_2\circ D_1}
=\intf{D_2}\circ\intf{D_1}.
\]

\[
\begin{array}{rcl@{\qquad\qquad}rcl}
\intf{\vcenter{\hbox{}}}
&=&
\begin{pmatrix}
1 & 0\\
0 & 1
\end{pmatrix}
&
\intf{\vcenter{\hbox{\begin{tikzpicture}
	\begin{pgfonlayer}{nodelayer}
		\node [style=none] (0) at (0, 0) {};
		\node [style=none] (1) at (0.3, 0.3) {};
		\node [style=none] (2) at (0.3, -0.3) {};
		\node [style=none] (a) at (-0.3, 0.3) {};
		\node [style=none] (b) at (-0.3, -0.3) {};
		\node [style=none] (3) at (0.3, -0.5) {};
		\node [style=none] (4) at (0.3, 0.5) {};
	\end{pgfonlayer}
	\begin{pgfonlayer}{edgelayer}
		\draw[bend left=35] (1.center) to (0.center);
		\draw[bend right=35] (2.center) to (0.center);
		\draw[bend right=35] (a.center) to (0.center);
		\draw[bend left=35] (b.center) to (0.center);
	\end{pgfonlayer}
\end{tikzpicture}}}}
&=&
\begin{pmatrix}
1 & 0 & 0 & 0\\
0 & 0 & 1 & 0\\
0 & 1 & 0 & 0\\
0 & 0 & 0 & 1
\end{pmatrix}
\\[2em]
\intf{\vphantom{\vcenter{\hbox{\begin{tikzpicture}
	\begin{pgfonlayer}{nodelayer}
		\node [style=none] (0) at (0, 0) {};
		\node [style=none] (1) at (-0.3, -0.3) {};
		\node [style=none] (a) at (0.3, -0.3) {};
		\node [style=none] (3) at (0.2, -0.5) {};
		\node [style=none] (4) at (0.2, 0.5) {};
	\end{pgfonlayer}
	\begin{pgfonlayer}{edgelayer}
		\draw[bend left=45] (1.center) to (0.center);
		\draw[bend right=45] (a.center) to (0.center);
	\end{pgfonlayer}
\end{tikzpicture}}}}\vcenter{\hbox{}}}
&=&
\ket{00}+\ket{11}
&
\intf{\vcenter{\hbox{\begin{tikzpicture}
	\begin{pgfonlayer}{nodelayer}
		\node [style=none] (0) at (0, 0) {};
		\node [style=none] (1) at (-0.3, -0.3) {};
		\node [style=none] (a) at (0.3, -0.3) {};
		\node [style=none] (3) at (0.2, -0.5) {};
		\node [style=none] (4) at (0.2, 0.5) {};
	\end{pgfonlayer}
	\begin{pgfonlayer}{edgelayer}
		\draw[bend left=45] (1.center) to (0.center);
		\draw[bend right=45] (a.center) to (0.center);
	\end{pgfonlayer}
\end{tikzpicture}}}}
&=&
\bra{00}+\bra{11}
\end{array}
\]
\[
\intf{\vcenter{\hbox{\begin{tikzpicture}\footnotesize
	\begin{pgfonlayer}{nodelayer}
		\node [style=none] (0) at (-0.5, 0.5) {};
		\node [style=none] (1) at (-1, 0.5) {};
		\node [style=none] (2) at (0, 0.45) {...};
		\node [style=none] (3) at (0, -0.45) {...};
		\node [style=none] (4) at (0.5, -0.5) {};
		\node [style=none] (5) at (0.75, 0.75) {};
		\node [style=none] (6) at (0.75, -0.75) {};
		\node [style=gn] (7) at (-0.25, -0) {$~\alpha~$};
		\node [style=none] (8) at (-1.25, 0.75) {};
		\node [style=none] (9) at (-1.25, -0.75) {};
		\node [style=none] (10) at (-1, -0.5) {};
		\node [style=none] (11) at (0.5, 0.5) {};
		\node [style=none] (12) at (-0.5, -0.5) {};
	\end{pgfonlayer}
	\begin{pgfonlayer}{edgelayer}
		\draw [style=braceedge] (6.center) to node[wire label, inner sep={5 pt}]{$m$} (9.center);
		\draw [style=none, bend left=15, looseness=1.00] (10.center) to (7);
		\draw [style=none, bend left=15, looseness=1.00] (12.center) to (7);
		\draw [style=braceedge] (8.center) to node[wire label, inner sep={5 pt}]{$n$} (5.center);
		\draw [style=none, bend left=15, looseness=1.00] (7) to (1.center);
		\draw [style=none, bend right=15, looseness=1.00] (4.center) to (7);
		\draw [style=none, bend right=15, looseness=1.00] (7) to (11.center);
		\draw [style=none, bend left=15, looseness=1.00] (7) to (0.center);
	\end{pgfonlayer}
\end{tikzpicture}}}}
=
\left\{
\begin{array}{rcl}
\ket{0\cdots 0} &\mapsto& \ket{0\cdots 0}\\
\ket{1\cdots 1} &\mapsto& e^{i\alpha}\ket{1\cdots 1}\\
\text{others} &\mapsto& 0
\end{array}
\right.
\]
\[
\intf{\vcenter{\hbox{\begin{tikzpicture}\footnotesize
	\begin{pgfonlayer}{nodelayer}
		\node [style=none] (0) at (1.25, -0.5) {};
		\node [style=none] (1) at (2.25, 0.5) {};
		\node [style=none] (2) at (0.5, -0.75) {};
		\node [style=none] (3) at (0.75, 0.5) {};
		\node [style=none] (4) at (0.5, 0.75) {};
		\node [style=rn] (5) at (1.5, 0) {$~\alpha~$};
		\node [style=none] (6) at (2.5, -0.75) {};
		\node [style=none] (7) at (1.75, -0.45) {...};
		\node [style=none] (8) at (0.75, -0.5) {};
		\node [style=none] (9) at (2.25, -0.5) {};
		\node [style=none] (10) at (1.25, 0.5) {};
		\node [style=none] (11) at (1.75, 0.45) {...};
		\node [style=none] (12) at (2.5, 0.75) {};
	\end{pgfonlayer}
	\begin{pgfonlayer}{edgelayer}
		\draw [style=braceedge] (6.center) to node[wire label, inner sep={5 pt}]{$m$} (2.center);
		\draw [style=none, bend left=15, looseness=1.00] (8.center) to (5);
		\draw [style=none, bend left=15, looseness=1.00] (0.center) to (5);
		\draw [style=braceedge] (4.center) to node[wire label, inner sep={5 pt}]{$n$} (12.center);
		\draw [style=none, bend left=15, looseness=1.00] (5) to (3.center);
		\draw [style=none, bend right=15, looseness=1.00] (9.center) to (5);
		\draw [style=none, bend right=15, looseness=1.00] (5) to (1.center);
		\draw [style=none, bend left=15, looseness=1.00] (5) to (10.center);
	\end{pgfonlayer}
\end{tikzpicture}}}}
=
\left\{
\begin{array}{rcl}
\ket{+\cdots +} &\mapsto& \ket{+\cdots +}\\
\ket{-\cdots -} &\mapsto& e^{i\alpha}\ket{-\cdots -}\\
\text{others} &\mapsto& 0
\end{array}
\right.
\]
\[
\intf{\vcenter{\hbox{\begin{tikzpicture}
	\begin{pgfonlayer}{nodelayer}
		\node [style={H box}] (0) at (0.5, 0) {};
		\node [style=none] (1) at (0.5, 0.3) {};
		\node [style=none] (2) at (0.5, -0.3) {};
				\node [style=none] (a) at (0.5, 0.5) {};
		\node [style=none] (b) at (0.5, -0.5) {};
	\end{pgfonlayer}
	\begin{pgfonlayer}{edgelayer}
		\draw (1.center) to (0);
		\draw (2.center) to (0);
	\end{pgfonlayer}
\end{tikzpicture}}}}
=\frac{1}{\sqrt{2}}
\begin{pmatrix}
1 & 1\\
1 & -1
\end{pmatrix}
\]

Now, we introduce the rule set \(\ZXsimp\), along with some facts previously established \cite{backens2020towards}.

\subsection{The Rule Set \texorpdfstring{\(\ZXsimp\)}{ZX\_simp}}
\label{sec:zx-simp}

We follow the phase convention of Backens--Perdrix--Wang throughout. Stabilizer phases are integer-labelled: the label \(k\in\mathbb Z\) denotes the angle \(k\pi/2\). Thus \(Z^k_{n,m}\) and \(X^k_{n,m}\) denote spiders with phase \(k\pi/2\). The addition in (S1) is ordinary integer addition, not addition modulo \(4\). In particular, \(k\) and \(k+4\) are not identified at the level of syntax. The \(2\pi\)-periodicity is a theorem of the full rule set of \cite{backens2020towards}, not a syntactic quotient that we impose here.

\begingroup
\footnotesize
\setlength{\tabcolsep}{3pt}
\renewcommand{\arraystretch}{1.35}
\begin{longtable}{@{}>{\centering\arraybackslash}m{0.08\textwidth}|>{\centering\arraybackslash}m{0.34\textwidth}|>{\centering\arraybackslash}m{0.52\textwidth}@{}}
\caption{\(\ZXsimp\)}
\label{tab:zx-simp}\\
\hline
Rule & Diagram & Typed version \\
\hline
\endfirsthead
\hline
Rule & Diagram & Typed version \\
\hline
\endhead
\hline
\multicolumn{3}{r@{}}{\emph{continued on next page}}\\
\endfoot
\hline
\endlastfoot
(S1) &
\begin{tikzpicture}[font={\footnotesize}]
	\begin{pgfonlayer}{nodelayer}
		\node [style=none] (1) at (0.25, -0) {$=$};
		\node [style=gn] (2) at (1.5, 0) { \footnotesize$\alpha{+}\beta$};
		\node [style=gn] (3) at (-0.75, -0.25) {\footnotesize$~\beta~$};
		\node [style=none] (4) at (-1.75, -0.5) {\raisebox{2mm}{...}};
		\node [style=none] (5) at (2, -0.75) {};
		\node [style=none] (6) at (-1, -0.75) {};
		\node [style=none] (7) at (1.5, -0.75) {\raisebox{2mm}{...}};
		\node [style=none] (8) at (-0.5, -0.75) {};
		\node [style=none] (9) at (1, -0.75) {};
		\node [style=none] (10) at (-2, -0.5) {};
		\node [style=none] (11) at (-1.5, -0.5) {};
		\node [style=none] (12) at (-0.75, -0.75) {\raisebox{2mm}{...}};
		\node [style=none] (13) at (1.5, 0.75) {\raisebox{-2mm}{...}};
		\node [style=none] (14) at (1, 0.75) {};
		\node [style=none] (15) at (-2, 0.75) {};
		\node [style=none] (16) at (-0.5, 0.5) {};
		\node [style=none] (17) at (-1.5, 0.75) {};
		\node [style=none] (18) at (2, 0.75) {};
		\node [style=gn] (19) at (-1.75, 0.25) {\footnotesize$~\alpha~$};
		\node [style=none] (20) at (-0.75, 0.5) {\raisebox{-2mm}{...}};
		\node [style=none] (21) at (-1, 0.5) {};
		\node [style=none] (22) at (-1.75, 0.75) {\raisebox{-2mm}{...}};
	\end{pgfonlayer}
	\begin{pgfonlayer}{edgelayer}
		\draw[bend right=23] (3) to (16.center);
		\draw[bend right=23] (3) to (6.center);
		\draw[bend left=23] (3) to (8.center);
		\draw[bend right=23] (19) to (10.center);
		\draw[bend left=23] (19) to (11.center);
		\draw%
		(19) to (3);
		\draw[bend right=23]  (14.center) to (2);
		\draw[bend right=23]  (2) to (9.center);
		\draw[bend right=23]  (5.center) to (2);
		\draw[bend right=23]  (2) to (18.center);
		\draw[bend left=23]  (19) to (15.center);
		\draw[bend right=23]  (19) to (17.center);
		\draw[bend left=23] (3) to (21.center);
	\end{pgfonlayer}
\end{tikzpicture} &
\(\begin{aligned}
&(1_q\otimes Z^\ell_{r+1,s})\circ (Z^k_{p,q+1}\otimes 1_r)
= Z^{k+\ell}_{p+r,q+s}: p+r\to q+s,\\
&p,q,r,s\in\mathbb N,\quad k,\ell\in\mathbb Z.
\end{aligned}\)
\\[0.5em]
(S3\(^{\prime}\)L) &
\(\cupDiagram=\greenCupDiagram\) &
\(\eta=Z^0_{0,2}:0\to 2.\)
\\[0.5em]
(S3\(^{\prime}\)R) &
\(\greenCupDiagram=\redCupDiagram\) &
\(Z^0_{0,2}=X^0_{0,2}:0\to 2.\)
\\[0.5em]
(B1) &
\begin{tikzpicture}
	\begin{pgfonlayer}{nodelayer}
		\node [style=gn] (0) at (0.75, 0) {};
		\node [style=none] (1) at (2.25, -0.25) {};
		\node [style=none] (2) at (0.5, -0.5) {};
		\node [style=rn] (3) at (2.25, 0.25) {};
		\node [style=none] (4) at (1, -0.5) {};
		\node [style=rn] (5) at (0.75, 0.5) {};
		\node [style=rn] (6) at (2.75, 0.25) {};
		\node [style=none] (7) at (2.75, -0.25) {};
		\node [style=none] (8) at (1.5, 0) {$=$};
		\node [style=rn] (9) at (0, 0.25) {};
		\node [style=gn] (10) at (0, -0.25) {};
	\end{pgfonlayer}
	\begin{pgfonlayer}{edgelayer}
		\draw [style=none] (5) to (0);
		\draw[bend right=23]  [style=none] (0) to (2.center);
		\draw[bend left=23]  [style=none] (0) to (4.center);
		\draw [style=none] (3) to (1.center);
		\draw [style=none] (6) to (7.center);
		\draw (9) to (10);
	\end{pgfonlayer}
\end{tikzpicture} &
\((Z^0_{1,0}\circ X^0_{0,1})\otimes
(Z^0_{1,2}\circ X^0_{0,1})=X^0_{0,1}\otimes X^0_{0,1}:0\to 2.\)
\\[0.5em]
(B2\(^{\prime}\)) &
\begin{tikzpicture}
	\begin{pgfonlayer}{nodelayer}
		\node [style=none] (0) at (3.75, 0.75) {};
		\node [style=rn] (1) at (0.5, -0.25) {};
		\node [style=none] (2) at (3.25, -0.75) {};
		\node [style=none] (3) at (1.25, 1) {};
		\node [style=none] (4) at (3.25, 0.75) {};
		\node [style=none] (5) at (0.5, -0.75) {};
		\node [style=none] (6) at (0.5, 1) {};
		\node [style=gn] (7) at (1.25, 0.5) {};
		\node [style=none] (8) at (2.25, 0) {$=$};
		\node [style=rn] (9) at (1.25, -0.25) {};
		\node [style=gn] (10) at (3.5, -0.25) {};
		\node [style=gn] (11) at (0.5, 0.5) {};
		\node [style=none] (12) at (1.25, -0.75) {};
		\node [style=none] (13) at (3.75, -0.75) {};
		\node [style=rn] (14) at (3.5, 0.25) {};
		\node [style=rn] (15) at (3, 0.25) {};
		\node [style=gn] (16) at (3, -0.25) {};
		\node [style=gn] (17) at (0, -0) {};
	\end{pgfonlayer}
	\begin{pgfonlayer}{edgelayer}
		\draw [style=none] (12.center) to (9);
		\draw [style=none] (5.center) to (1);
		\draw [style=none] (7) to (3.center);
		\draw [style=none, bend right=23, looseness=1.00] (9) to (7);
		\draw [style=none] (11) to (6.center);
		\draw [style=none, bend left=23, looseness=1.00] (1) to (11);
		\draw [style=none, bend right=23, looseness=1.00] (13.center) to (10);
		\draw [style=none] (10) to (14);
		\draw [style=none, bend left=23, looseness=1.00] (14) to (4.center);
		\draw [style=none, bend right=23, looseness=1.00] (14) to (0.center);
		\draw [bend right=23, looseness=1.00] (10) to (2.center);
		\draw (11) to (9);
		\draw (7) to (1);
		\draw (15) to (16);
	\end{pgfonlayer}
\end{tikzpicture} &
\(\begin{aligned}
&Z^0_{0,0}\otimes\big((X^0_{2,1}\otimes X^0_{2,1})
\circ(1_1\otimes\sigma\otimes 1_1)
\circ(Z^0_{1,2}\otimes Z^0_{1,2})\big)\\
&\qquad =(Z^0_{1,0}\circ X^0_{0,1})\otimes
(Z^0_{1,2}\circ X^0_{2,1}):2\to 2.
\end{aligned}\)
\\[0.5em]
(EU\(^{\prime}\)) &
\begin{tikzpicture}
	\begin{pgfonlayer}{nodelayer}
		\node [style=gn] (0) at (2.25, 0.5) {$\nicefrac{\pi}{2}$};
		\node [style=gn] (9) at (3, 0.25) {$\nicefrac{\textnormal{-}\pi}{2}$};
		\node [style=rn] (1) at (2.25, -0) {};
		\node [style=none] (2) at (0.5, 0.5) {};
		\node [style=none] (3) at (0.5, -0.5) {};
		\node [style=none] (4) at (2.25, -1) {};
		\node [style=none] (5) at (1.25, 0) {$=$};
		\node [style={H box}] (6) at (0.5, -0) {};
		\node [style=gn] (7) at (2.25, -0.5) {$\nicefrac{\pi}{2}$};
		\node [style=none] (8) at (2.25, 1) {};
	\end{pgfonlayer}
	\begin{pgfonlayer}{edgelayer}
		\draw (2.center) to (6);
		\draw (3.center) to (6);
		\draw (8.center) to (0);
		\draw (0) to (1);
		\draw (1) to (7);
		\draw (7) to (4.center);
		\draw (9) to (1);
	\end{pgfonlayer}
\end{tikzpicture} &
\(H=Z^1_{1,1}\circ X^0_{2,1}\circ
(Z^1_{1,1}\otimes Z^{-1}_{0,1}):1\to 1.\)
\\[0.5em]
(H) &
\begin{tikzpicture}
	\begin{pgfonlayer}{nodelayer}
		\node [style=none] (0) at (1, 0.75) {};
		\node [style=none] (1) at (-1.25, -1) {};
		\node [style={H box}] (2) at (-1.25, 0.5)  {};
		\node [style=none] (3) at (-1.75, -0.75) {\raisebox{2mm}{...}};
		\node [style=none] (4) at (1.5, -0.75) {\raisebox{2mm}{...}};
		\node [style=none] (5) at (1.5, 0.75) {\raisebox{-2mm}{...}};
		\node [style=rn] (6) at (-1.75, -0) {\footnotesize$~\alpha~$};
		\node [style={H box}] (7) at (-2.25, -0.5)  {};
		\node [style=none] (8) at (-1.25, 1) {};
		\node [style={H box}] (9) at (-1.25, -0.5)  {};
		\node [style=none] (10) at (-1.75, 0.75) {\raisebox{-2mm}{...}};
		\node [style=none] (11) at (0, -0) {$=$};
		\node [style=gn] (12) at (1.5, -0) {\footnotesize$~\alpha~$};
		\node [style=none] (13) at (2, -0.75) {};
		\node [style=none] (14) at (2, 0.75) {};
		\node [style=none] (15) at (-2.25, 1) {};
		\node [style={H box}] (16) at (-2.25, 0.5) {};
		\node [style=none] (17) at (-2.25, -1) {};
		\node [style=none] (18) at (1, -0.75) {};
	\end{pgfonlayer}
	\begin{pgfonlayer}{edgelayer}
		\draw[bend right] (6) to (7);
		\draw[bend left] (6) to (9);
		\draw (9) to (1.center);
		\draw (7) to (17.center);
		\draw[bend left] (6) to (16);
		\draw[bend right] (6) to (2);
		\draw (2) to (8.center);
		\draw (16) to (15.center);
		\draw[bend left=23]  (12) to (0.center);
		\draw[bend right=23]  (12) to (14.center);
		\draw[bend right=23]  (12) to (18.center);
		\draw[bend left=23]  (12) to (13.center);
	\end{pgfonlayer}
\end{tikzpicture} &
\(\begin{aligned}
H^{\otimes m}\circ X^k_{n,m}\circ H^{\otimes n}
&=Z^k_{n,m}:n\to m,\quad n,m\in\mathbb N,\quad k\in\mathbb Z.
\end{aligned}\)
\\[0.5em]
(IV\(^{\prime}\)) &
\begin{tikzpicture}
	\begin{pgfonlayer}{nodelayer}
		\node [style=gn] (0) at (-0.25, -0.25) {};
		\node [style=rn] (1) at (-0.25, 0.5) {};
		\node [style=gn] (2) at (0.25, -0.25) {};
		\node [style=rn] (3) at (0.25, 0.5) {};
		\node [style=none] (4) at (0.75, 0) {$=$};
		\node [style=none] (5) at (1.75, 0.25) {};
		\node [style=none] (6) at (1.75, -0.25) {};
		\node [style=none] (7) at (1.25, 0.25) {};
		\node [style=none] (8) at (1.25, -0.25) {};
		\node [style=gn] (9) at (-0.75, -0) {};
	\end{pgfonlayer}
	\begin{pgfonlayer}{edgelayer}
		\draw [bend left=45, looseness=1.00] (1) to (0);
		\draw [bend right=45, looseness=1.00] (1) to (0);
		\draw (1) to (0);
		\draw [bend left=45, looseness=1.00] (3) to (2);
		\draw [bend right=45, looseness=1.00] (3) to (2);
		\draw (3) to (2);
		\draw [color=gray, dashed] (7.center) to (8.center);
		\draw [color=gray, dashed] (8.center) to (6.center);
		\draw [color=gray, dashed] (6.center) to (5.center);
		\draw [color=gray, dashed] (5.center) to (7.center);
	\end{pgfonlayer}
\end{tikzpicture} &
\(Z^0_{0,0}\otimes
(Z^0_{3,0}\circ X^0_{0,3})\otimes
(Z^0_{3,0}\circ X^0_{0,3})=e:0\to 0.\)
\\[0.5em]
(ZO\(^{\prime}\)) &
\begin{tikzpicture}
	\begin{pgfonlayer}{nodelayer}
		\node [style=gn] (0) at (-1, 0) {$~\pi~$};
		\node [style=none] (2) at (-0.5, -0.25) {};
		\node [style=none] (3) at (0, 0) {$=$};
		\node [style=gn] (4) at (-0.5, 0.25) {};
		\node [style=rn] (6) at (1, 0.25) {};
		\node [style=none] (7) at (1, -0.25) {};
		\node [style=gn] (8) at (0.5, 0) {$~\pi~$};
	\end{pgfonlayer}
	\begin{pgfonlayer}{edgelayer}
		\draw (2) to (4);
		\draw (6) to (7.center);
	\end{pgfonlayer}
\end{tikzpicture} &
\(\begin{aligned}
Z^2_{0,0}\otimes Z^0_{0,1}
=Z^2_{0,0}\otimes X^0_{0,1}:0\to 1.
\end{aligned}\)
\end{longtable}
\endgroup

\begin{definition}[Meta-rule: only connectivity matters]
Two diagrams represent the same matrix whenever one can be transformed into the other by moving components around without changing their connections.
\end{definition}

\begin{theorem}[\cite{backens2020towards}, Theorem~2.10]
\(\ZXsimp\) is sound and complete, i.e. for any two stabilizer ZX-calculus diagrams \(D_1\) and \(D_2\), we have:
\[
\ZXsimp\vdash D_1=D_2\quad\Longleftrightarrow\quad \intf{D_1}=\intf{D_2}.
\]
\end{theorem}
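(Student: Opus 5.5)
This statement is the soundness-and-completeness theorem of Backens--Perdrix--Wang, cited here as \cite{backens2020towards}, Theorem~2.10, so the plan is to reconstruct the standard argument in two directions.

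\textbf{Soundness} ($\Rightarrow$). Because \(\intf{\cdot}\) is defined compositionally, I will use structural induction on derivations. Each axiom of \Cref{tab:zx-simp} must hold under \(\intf{\cdot}\). The congruence steps (plugging equal subdiagrams into a context via \(\circ\) and \(\otimes\)) preserve equality by functoriality. The meta-rule is sound because the cup, cap and braiding are interpreted as the standard compact structure on \(\mathbb C^2\), and every spider is interpreted as a map symmetric in its legs.

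So soundness reduces to finitely many matrix checks, one per rule; (S1), (H) and (ZO\('\)) are parametric families rather than single equations. I would verify them as follows:
\begin{itemize}
\item (S1): a contraction computation on \(\ket{0\cdots0}\) and \(\ket{1\cdots1}\).
\item (H): follows because \(H\) swaps the \(Z\) and \(X\) eigenbases.
\item (EU\('\)): the usual Euler decomposition, checked up to the scalar supplied by the \(-\pi/2\) state.
\item (B1), (B2\('\)), (IV\('\)), (ZO\('\)): small explicit computations that include the scalar factors \(\sqrt2\) and \(1/\sqrt2\).
\end{itemize}
One point needs care: phases are integers, and under \(\intf{\cdot}\) they act only through \(e^{ik\pi/2}\), so soundness is unaffected.

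\textbf{Completeness} ($\Leftarrow$). I would not redo Backens' original proof from scratch. Instead I would reduce to a known complete axiomatisation, namely Backens' stabilizer rule set \cite{backens2014zx}. The task is to derive each rule of that set from \(\ZXsimp\) plus the meta-rule. The derivations I need, in order, are:
\begin{itemize}
\item colour-swapped versions of (S1), (S3\('\)), (B1) and (B2\('\)), obtained by conjugating with Hadamards via (H) and showing \(H\circ H=1\) from (EU\('\)) and (S1);
\item the full bialgebra and Hopf laws from (B2\('\)) with the scalar bookkeeping;
\item the \(\pi\)-copy and \(\pi\)-commutation rules;
\item \(2\pi\)-periodicity of phases, i.e.\ \(k\sim k+4\);
\item the inverse scalar rule from (IV\('\)) and the zero rule from (ZO\('\)).
\end{itemize}
Once those are in hand, completeness follows from the earlier theorem. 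That theorem rests on graph states with local Cliffords: rewrite any diagram into GS+LC normal form, then reduce to a reduced normal form that is unique for a given interpretation, so equal interpretations imply equal normal forms.

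\textbf{Main obstacle.} I expect the hardest part to be the derivations tied to the two rules whose necessity the present paper studies. The first is obtaining the red cup from (S3\('\)R) and using it to justify colour-symmetric flipping; the second is extracting the full \(n\)-to-\(m\) bialgebra and Hopf laws from the minimal (B2\('\)) with correct scalars. For these I would first try the approach of \cite{backens2020towards}: fuse extra legs with (S1) and use (B1) and (IV\('\)) to cancel the floating scalar spiders.
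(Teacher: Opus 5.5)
Your architecture (soundness rule by rule, completeness by deriving an earlier complete rule set from $\ZXsimp$) is the one used in \cite{backens2020towards}; this paper only cites the result. However, one load-bearing step fails as you describe it. You obtain all colour-swapped rules by Hadamard conjugation after ``showing $H\circ H=1$ from (EU') and (S1)'', and no such derivation exists. The model $\intf{-}^{(\mathrm{S3}'\mathrm{R})}$ of \Cref{sec:s3r-necessity} validates the meta-rule and every rule of $\ZXsimp$ except (S3'R), in particular (S1) and (EU'). Yet $\intf{H\circ H}^{(\mathrm{S3}'\mathrm{R})}=(i\,\intf{H})^2=-I$, while the bare wire is sent to $I$. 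So $H\circ H=1_1$ genuinely needs (S3'R). The repair is short:
\begin{itemize}
\item by the meta-rule, $X^0_{1,1}=(\epsilon\otimes 1_1)\circ(1_1\otimes X^0_{0,2})$, and (S3'R), then (S3'L), then yanking reduce this to $1_1$;
\item the same argument with (S3'L) alone gives $Z^0_{1,1}=1_1$;
\item (H) with $n=m=1$, $k=0$ then reads $H\circ X^0_{1,1}\circ H=Z^0_{1,1}$, i.e.\ $H\circ H=1_1$.
\end{itemize}
The model of \Cref{sec:b2-necessity} constrains your periodicity step in the same way. There $Z^4_{1,1}=\diag(1,1+\delta,1+4\delta,1)\neq Z^0_{1,1}$, so any derivation of $k\sim k+4$ must use (B2').

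Second, your final step invokes the wrong completeness theorem. The result of \cite{backens2014zx} is for the scalar-free calculus, where equalities hold only up to a non-zero scalar. The statement here asserts $\ZXsimp\vdash D_1=D_2$ whenever $\intf{D_1}=\intf{D_2}$ exactly. Deriving the rules of \cite{backens2014zx} would therefore give completeness only up to scalars. You need instead Backens' subsequent rule set, which is complete for stabilizer quantum mechanics including scalars; this is the set \cite{backens2020towards} reduces to. Its completeness proof supplements the GS+LC argument with a normal form for scalar subdiagrams. The inverse and zero rules on your list belong to that set, so your derivations already aim at the right target; only the theorem you cite has to change. Finally, a small omission in soundness: the meta-rule also bends wires through Hadamard boxes. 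So besides leg-symmetry of spiders you need $\intf{H}^T=\intf{H}$.
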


\begin{theorem}[\cite{backens2020towards}, Lemmas~3.1--3.8]
Every rule in \(\ZXsimp\setminus\{(\mathrm{S3}'\mathrm{R}),(\mathrm{B2}')\}\) is necessary, and at least one of (S3'R), (B2') is necessary.
\end{theorem}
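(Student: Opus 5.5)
This theorem is cited from Backens--Perdrix--Wang, so a proof here would reconstruct their argument. I would use the standard method for independence of axioms: for each rule $R$, define an alternative interpretation $\intf{\cdot}^{\sharp}$ of the generators into some symmetric monoidal category. The interpretation must be functorial for $\circ$ and $\otimes$ and must respect the meta-rule, which means it should factor through a compact closed category in which cups and caps are interpreted by a genuine compact structure. The goal is for every rule other than $R$ to hold under $\intf{\cdot}^{\sharp}$ while $R$ fails. Since the rules used in a derivation are preserved by the interpretation, $R$ then cannot be derived from the others.

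The argument would proceed rule by rule.

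\begin{itemize}[label={}]
\item \emph{(S1), (H).} These are the only rules that produce spiders of arbitrary arity or nonzero phase. I would break them with a non-structural invariant, for example a counting interpretation: send every diagram to a scalar-valued invariant such as the parity of the number of spiders of a given kind, or the sum of phases modulo some $m$. This invariant should be preserved by every other rule, which involves only bounded shapes, but not by spider fusion. For (H), I would reinterpret $H$ as the identity. Then (H) would force green to equal red, which fails, while (EU$'$) still needs checking; if it breaks, I would change the interpretation of $H$ to keep it.
\item \emph{(IV$'$), (ZO$'$).} These are scalar rules, so I would break them by rescaling. Multiply each spider's standard interpretation by a factor depending on its arity and colour, chosen so that fusion and the other rules stay balanced. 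For (IV$'$), the factor should change the value of the closed diagram. For (ZO$'$), I would send the phase $\pi$ to $0$, or use a nonstandard phase group.
\item \emph{(B1), (EU$'$), (S3$'$L).} For (B1) and (EU$'$), I would use interpretations over other fields or other phase groups: for instance, mapping phases via $k\mapsto k\cdot c$ for a suitable $c$ breaks (EU$'$) but preserves (S1) and (H). For (S3$'$L), I would reinterpret the cup as something other than the green unit, while keeping a valid compact structure that is compatible with the meta-rule.
\item \emph{(S3$'$R), (B2$'$) jointly.} Here I would take a single interpretation in which both rules fail and all the remaining rules hold. For example, interpret red spiders so that their induced cup differs, with the bialgebra law failing as a consequence.
\end{itemize}

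The main obstacle is this last bullet, together with confirming that each alternative interpretation really satisfies all the other rules, which the meta-rule forces to include symmetry and compactness. Verifying those for the compact-structure rules is where I would expect most of the effort to go.
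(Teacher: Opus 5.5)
The paper does not prove this statement; it imports it from Backens--Perdrix--Wang. Your framework is the right one: an interpretation into a compact closed category that respects the meta-rule, validates every rule except the target, and falsifies the target. It is exactly how the paper handles (S3$'$R) and (B2$'$) in Sections~3 and~5. The gap is that the substance of the theorem lies in exhibiting the separating interpretations, and you do not produce a single verified one. (B1), (S3$'$L) and the joint case are left at the level of intent, and the concrete candidates you do give fail.

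\emph{(S1).} A parity count of green spiders is already violated by (S3$'$L), whose sides contain zero and one green spider. A sum of phase labels modulo $m$ is preserved by (S1) itself, and is violated by (EU$'$), whose sides have label sums $0$ and $1$. Note also that (EU$'$) and (ZO$'$) do contain nonzero phases, contrary to your claim. A count that does work is the number of spiders with nonzero label, taken in $\mathbb{Z}/3$: (EU$'$) gives $0$ versus $3\equiv 0$, whereas (S1) with $k=1$, $\ell=-1$ gives $2$ versus $0$.

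\emph{(H).} Sending $H$ to the identity breaks (EU$'$). Its right-hand side consists of spiders only, so it still evaluates to the Hadamard. Your fallback of changing the interpretation of $H$ cannot repair this, since (EU$'$) pins $H$ down in terms of the spiders. What works is to use the fact that (H) is the only rule containing red spiders with nonzero label, e.g.\ interpret every $X^k_{n,m}$ as the standard $X^0_{n,m}$.

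\emph{(EU$'$).} In the qubit model, the relabelling $k\mapsto ck$ gives, for odd $c$, the standard interpretation or its complex conjugate, and there (EU$'$) holds because $H$ is real. For even $c$ it sends $Z^2_{0,0}$ to $2\neq 0$, so (ZO$'$) fails as well.

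\emph{(IV$'$).} Your plan fails for a structural reason. Multiply each green (red) spider of arity $d$ by $g_d$ ($r_d$), and $H$ by $h$.
\begin{itemize}
\item The left-hand sides of (IV$'$) and (B2$'$) have the same vertices: $Z^0_{0,0}$ plus two green and two red phase-free spiders of arity $3$.
\item So (B2$'$) forces the weight of the left-hand side of (IV$'$) to equal $g_1r_1g_3r_3$, the weight of the right-hand side of (B2$'$).
\item (B1) gives $g_1g_3=1$, (H) gives $r_d=g_dh^{-d}$, and (S3$'$L), (S3$'$R) give $g_2=r_2=1$, hence $h^2=1$.
\item Therefore $g_1r_1g_3r_3=(g_1g_3)^2h^{-4}=1$, and (IV$'$) holds automatically.
\end{itemize}
A model separating (IV$'$) must therefore be more than a rescaling of the standard interpretation.

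\emph{The joint case.} The heuristic in your last bullet, that changing the red cup makes the bialgebra law fail as a consequence, is false. In the paper's $\intf{-}^{(\mathrm{S3}'\mathrm{R})}$ the red cup is $-\eta$, yet (B2$'$) holds. That same interpretation already settles the last clause, since it validates every rule except (S3$'$R). No model in which both rules fail is needed.
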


As we will see, it turns out that both of these rules are individually necessary, meaning every rule in \(\ZXsimp\) is necessary relative to the connectivity meta-rule of Backens--Perdrix--Wang \cite{backens2020towards}.

\section{On the Necessity of (S3'R)}
\label{sec:s3r-necessity}

S3'R is necessary, as we argue below.

\begin{theorem}
\label{thm:s3r-necessary}
The (S3'R) rule is necessary: \(\ZXsimp\setminus\{(\mathrm{S3}'\mathrm{R})\}\nvdash(\mathrm{S3}'\mathrm{R})\).
\end{theorem}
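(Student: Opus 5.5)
The plan is to exhibit an \emph{alternative interpretation} $F$ of stabilizer ZX diagrams with three properties. First, $F$ is a strict symmetric monoidal functor that preserves cups, caps, braiding and identity. This makes it compatible with the meta-rule ``only connectivity matters''. Second, $F$ validates every rule of $\ZXsimp\setminus\{(\mathrm{S3}'\mathrm{R})\}$. Third, $F(Z^0_{0,2})\neq F(X^0_{0,2})$. Since derivability in $\ZXsimp\setminus\{(\mathrm{S3}'\mathrm{R})\}$ is the congruence generated by the remaining rules and the meta-rule, soundness of $F$ for these rules gives $\ZXsimp\setminus\{(\mathrm{S3}'\mathrm{R})\}\nvdash(\mathrm{S3}'\mathrm{R})$. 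This is the same strategy Backens--Perdrix--Wang use for the other rules.

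The concrete shape I would try is a ``twisted red'' interpretation. It keeps the standard meaning of green spiders, so (S1) and (S3$'$L) hold automatically. Red spiders are conjugated by a fixed $2\times 2$ matrix $A$ on every leg:
\[
F(X^k_{n,m}) \;=\; (A^{\otimes m})\circ\intf{X^k_{n,m}}\circ\big((A^{-1})^{\otimes n}\big).
\]
The Hadamard is then reinterpreted so that (H) survives. Matching output and input legs forces $F(H)=\intf{H}A^{-1}=A\intf{H}$, that is, $A\intf{H}A=\intf{H}$.

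Because cups are untouched, $F(X^0_{0,2})=(A\otimes A)(\ket{00}+\ket{11})$ up to the red-cup convention. So (S3$'$R) fails exactly when $AA^{T}\neq I$. The remaining rules each become a polynomial condition on $A$:
\begin{itemize}[label={}]
\item (B1): $Z^0_{1,2}A\ket{+}$ must be proportional to $A\ket{+}\otimes A\ket{+}$, with the scalar $\bra{0}A\ket{+}+\bra{1}A\ket{+}$ balancing it.
\item (B2$'$): the bialgebra equation must hold with $A$'s inserted at each red--green interface.
\item (IV$'$): a scalar identity.
\item (ZO$'$): $\ket{+}\propto A\ket{+}$ after the $\pi$ scalar, which forces $A\ket{+}$ to be a multiple of $\ket{+}$.
\item (EU$'$): the decomposition of $F(H)$ through the twisted red spider.
\end{itemize}

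The main obstacle is finding an $A$ that satisfies all of these at once. My first attempts with diagonal or unitary $A$ run into $AA^T=I$. So I would allow non-unitary $A$, or $A$ defined only up to a global scalar correction. A scalar correction can be implemented by multiplying $F$ on each spider by $\lambda^{(\text{number of legs})}$, and I would tune it against (B2$'$) and (IV$'$).

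If no single $A$ works, my fallback is to leave the category of complex matrices. One option is to interpret in matrices over a commutative ring, such as $\mathbb Z[\lambda]/(\text{relations})$ or a dual-number extension $\mathbb C[\varepsilon]/\varepsilon^2$. There, $F(X)$ is the standard interpretation plus an $\varepsilon$-perturbation. The perturbation should be chosen to vanish in every rule except (S3$'$R), for instance a first-order deformation of the red Frobenius algebra that only the red cup detects. Verifying that such a deformation is compatible with (B2$'$) and (EU$'$) is the step I expect to require the real work. It is a finite computation, but an unforgiving one.
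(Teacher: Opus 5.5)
Your proposal has a genuine gap. It never actually produces an interpretation, and the one concrete candidate it describes cannot separate (S3$'$R) once the meta-rule is taken seriously.

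\textbf{Why the twisted-red candidate fails.} ``Only connectivity matters'' identifies the red cup \(X^0_{0,2}\) with \((1_1\otimes X^0_{1,1})\circ\eta\), which is a red identity spider with one leg bent by the cup. In your twisted model \(F(X^0_{1,1})=A\,\intf{X^0_{1,1}}\,A^{-1}=I\), because \(\intf{X^0_{1,1}}=I\). So any version of the model that is sound for the meta-rule must have \(F(X^0_{0,2})=\eta=F(Z^0_{0,2})\). In other words, it validates (S3$'$R) for every invertible \(A\).

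Your criterion ``(S3$'$R) fails iff \(AA^T\neq I\)'' comes from the unbent formula \((A\otimes A)\eta=(AA^T\otimes I)\eta\). But exactly when \(AA^T\neq I\), the red generators stop being invariant under bending legs, so the meta-rule fails. Being a strict monoidal functor that preserves cups and caps is not enough. Each generator's image must itself be invariant under bending and permuting legs, and for your red spiders this means \(A^{-1}=A^T\).

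\textbf{Why the proposed repairs do not help.} Rescaling by \(\lambda^{\text{number of legs}}\) does not fix this on the standard interpretation (\(A=I\)). Suppose green spiders get \(\beta^{\deg}\) and red ones get \(\nu^{\deg}\). Then (S1), for instance \(Z^0_{1,1}\circ Z^0_{1,1}=Z^0_{1,1}\), forces \(\beta^2=1\). Next, (B2$'$) compares \(\beta^6\nu^6\) with \(\beta^4\nu^4\) and forces \(\nu^2=1\). Both cups are then scaled by \(1\), so (S3$'$R) still holds. The dual-number fallback is left entirely unspecified, so as written nothing is proved. Separately, your (ZO$'$) constraint is spurious: with standard green spiders \(\intf{Z^2_{0,0}}=1+e^{i\pi}=0\), so both sides vanish and no condition on \(A\) arises.

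\textbf{The missing idea.} The phase-free red identity spider must not be sent to the identity. That requires a per-vertex scalar, not a per-leg one or a change of basis. The paper keeps the standard matrices and rescales only the generators:
\(Z^k_{n,m}\mapsto i^{n+m-2}\intf{Z^k_{n,m}}\), \(X^k_{n,m}\mapsto-\intf{X^k_{n,m}}\), and \(H\mapsto i\,\intf{H}\). Identity, braiding, cup, cap and unit are unchanged.

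Every diagram then goes to \(c(D)\intf{D}\), where \(c(D)\) depends only on the colours and degrees of the vertices, so soundness for the meta-rule is automatic. Since the standard interpretation satisfies all rules, it suffices to check that \(c\) agrees on both sides of each rule, which is a short count:
in (S1) the fused wire contributes \(i^2\), cancelling one factor \(i^{-2}\); in (H) both sides give \(-i^{n+m}\); in (EU$'$) both give \(i\); in (B1), (B2$'$) and (IV$'$) both give \(1\); and (ZO$'$) has both sides equal to \(0\).
Meanwhile \(c(Z^0_{0,2})=1\neq-1=c(X^0_{0,2})\), so (S3$'$R) fails. Note that here \(X^0_{1,1}\mapsto-I\), which is precisely what your conjugation framework rules out.
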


A countermodel here means an interpretation satisfying every non-target rule while falsifying the target equation.

\begin{proof}
Work in ordinary complex matrices on qubits, with object \(1\) interpreted as \(\mathbb C^2\).

Let \(\intf{-}\) be the standard interpretation functor.

For any diagram \(D:n\to m\), define a new interpretation \(\intf{D}^{(\mathrm{S3}'\mathrm{R})}\in\mathbb C^{2^m\times 2^n}\) inductively:
\[
\intf{D_1\otimes D_2}^{(\mathrm{S3}'\mathrm{R})}
=\intf{D_1}^{(\mathrm{S3}'\mathrm{R})}\otimes\intf{D_2}^{(\mathrm{S3}'\mathrm{R})},
\quad
\intf{D_2\circ D_1}^{(\mathrm{S3}'\mathrm{R})}
=\intf{D_2}^{(\mathrm{S3}'\mathrm{R})}\circ\intf{D_1}^{(\mathrm{S3}'\mathrm{R})},
\]
\[
\intf{\vcenter{\hbox{\begin{tikzpicture}\footnotesize
	\begin{pgfonlayer}{nodelayer}
		\node [style=none] (0) at (-0.5, 0.5) {};
		\node [style=none] (1) at (-1, 0.5) {};
		\node [style=none] (2) at (0, 0.45) {...};
		\node [style=none] (3) at (0, -0.45) {...};
		\node [style=none] (4) at (0.5, -0.5) {};
		\node [style=none] (5) at (0.75, 0.75) {};
		\node [style=none] (6) at (0.75, -0.75) {};
		\node [style=gn] (7) at (-0.25, -0) {$~\alpha~$};
		\node [style=none] (8) at (-1.25, 0.75) {};
		\node [style=none] (9) at (-1.25, -0.75) {};
		\node [style=none] (10) at (-1, -0.5) {};
		\node [style=none] (11) at (0.5, 0.5) {};
		\node [style=none] (12) at (-0.5, -0.5) {};
	\end{pgfonlayer}
	\begin{pgfonlayer}{edgelayer}
		\draw [style=braceedge] (6.center) to node[wire label, inner sep={5 pt}]{$m$} (9.center);
		\draw [style=none, bend left=15, looseness=1.00] (10.center) to (7);
		\draw [style=none, bend left=15, looseness=1.00] (12.center) to (7);
		\draw [style=braceedge] (8.center) to node[wire label, inner sep={5 pt}]{$n$} (5.center);
		\draw [style=none, bend left=15, looseness=1.00] (7) to (1.center);
		\draw [style=none, bend right=15, looseness=1.00] (4.center) to (7);
		\draw [style=none, bend right=15, looseness=1.00] (7) to (11.center);
		\draw [style=none, bend left=15, looseness=1.00] (7) to (0.center);
	\end{pgfonlayer}
\end{tikzpicture}}}}^{(\mathrm{S3}'\mathrm{R})}
:= i^{n+m-2}\cdot \intf{\vcenter{\hbox{\begin{tikzpicture}\footnotesize
	\begin{pgfonlayer}{nodelayer}
		\node [style=none] (0) at (-0.5, 0.5) {};
		\node [style=none] (1) at (-1, 0.5) {};
		\node [style=none] (2) at (0, 0.45) {...};
		\node [style=none] (3) at (0, -0.45) {...};
		\node [style=none] (4) at (0.5, -0.5) {};
		\node [style=none] (5) at (0.75, 0.75) {};
		\node [style=none] (6) at (0.75, -0.75) {};
		\node [style=gn] (7) at (-0.25, -0) {$~\alpha~$};
		\node [style=none] (8) at (-1.25, 0.75) {};
		\node [style=none] (9) at (-1.25, -0.75) {};
		\node [style=none] (10) at (-1, -0.5) {};
		\node [style=none] (11) at (0.5, 0.5) {};
		\node [style=none] (12) at (-0.5, -0.5) {};
	\end{pgfonlayer}
	\begin{pgfonlayer}{edgelayer}
		\draw [style=braceedge] (6.center) to node[wire label, inner sep={5 pt}]{$m$} (9.center);
		\draw [style=none, bend left=15, looseness=1.00] (10.center) to (7);
		\draw [style=none, bend left=15, looseness=1.00] (12.center) to (7);
		\draw [style=braceedge] (8.center) to node[wire label, inner sep={5 pt}]{$n$} (5.center);
		\draw [style=none, bend left=15, looseness=1.00] (7) to (1.center);
		\draw [style=none, bend right=15, looseness=1.00] (4.center) to (7);
		\draw [style=none, bend right=15, looseness=1.00] (7) to (11.center);
		\draw [style=none, bend left=15, looseness=1.00] (7) to (0.center);
	\end{pgfonlayer}
\end{tikzpicture}}}},
\]
\[
\intf{{\let\alpha\beta\vcenter{\hbox{\begin{tikzpicture}\footnotesize
	\begin{pgfonlayer}{nodelayer}
		\node [style=none] (0) at (1.25, -0.5) {};
		\node [style=none] (1) at (2.25, 0.5) {};
		\node [style=none] (2) at (0.5, -0.75) {};
		\node [style=none] (3) at (0.75, 0.5) {};
		\node [style=none] (4) at (0.5, 0.75) {};
		\node [style=rn] (5) at (1.5, 0) {$~\alpha~$};
		\node [style=none] (6) at (2.5, -0.75) {};
		\node [style=none] (7) at (1.75, -0.45) {...};
		\node [style=none] (8) at (0.75, -0.5) {};
		\node [style=none] (9) at (2.25, -0.5) {};
		\node [style=none] (10) at (1.25, 0.5) {};
		\node [style=none] (11) at (1.75, 0.45) {...};
		\node [style=none] (12) at (2.5, 0.75) {};
	\end{pgfonlayer}
	\begin{pgfonlayer}{edgelayer}
		\draw [style=braceedge] (6.center) to node[wire label, inner sep={5 pt}]{$m$} (2.center);
		\draw [style=none, bend left=15, looseness=1.00] (8.center) to (5);
		\draw [style=none, bend left=15, looseness=1.00] (0.center) to (5);
		\draw [style=braceedge] (4.center) to node[wire label, inner sep={5 pt}]{$n$} (12.center);
		\draw [style=none, bend left=15, looseness=1.00] (5) to (3.center);
		\draw [style=none, bend right=15, looseness=1.00] (9.center) to (5);
		\draw [style=none, bend right=15, looseness=1.00] (5) to (1.center);
		\draw [style=none, bend left=15, looseness=1.00] (5) to (10.center);
	\end{pgfonlayer}
\end{tikzpicture}}}}}^{(\mathrm{S3}'\mathrm{R})}
:= -1\cdot \intf{{\let\alpha\beta\vcenter{\hbox{\begin{tikzpicture}\footnotesize
	\begin{pgfonlayer}{nodelayer}
		\node [style=none] (0) at (1.25, -0.5) {};
		\node [style=none] (1) at (2.25, 0.5) {};
		\node [style=none] (2) at (0.5, -0.75) {};
		\node [style=none] (3) at (0.75, 0.5) {};
		\node [style=none] (4) at (0.5, 0.75) {};
		\node [style=rn] (5) at (1.5, 0) {$~\alpha~$};
		\node [style=none] (6) at (2.5, -0.75) {};
		\node [style=none] (7) at (1.75, -0.45) {...};
		\node [style=none] (8) at (0.75, -0.5) {};
		\node [style=none] (9) at (2.25, -0.5) {};
		\node [style=none] (10) at (1.25, 0.5) {};
		\node [style=none] (11) at (1.75, 0.45) {...};
		\node [style=none] (12) at (2.5, 0.75) {};
	\end{pgfonlayer}
	\begin{pgfonlayer}{edgelayer}
		\draw [style=braceedge] (6.center) to node[wire label, inner sep={5 pt}]{$m$} (2.center);
		\draw [style=none, bend left=15, looseness=1.00] (8.center) to (5);
		\draw [style=none, bend left=15, looseness=1.00] (0.center) to (5);
		\draw [style=braceedge] (4.center) to node[wire label, inner sep={5 pt}]{$n$} (12.center);
		\draw [style=none, bend left=15, looseness=1.00] (5) to (3.center);
		\draw [style=none, bend right=15, looseness=1.00] (9.center) to (5);
		\draw [style=none, bend right=15, looseness=1.00] (5) to (1.center);
		\draw [style=none, bend left=15, looseness=1.00] (5) to (10.center);
	\end{pgfonlayer}
\end{tikzpicture}}}}},
\]
\[
\intf{\vcenter{\hbox{\begin{tikzpicture}
	\begin{pgfonlayer}{nodelayer}
		\node [style={H box}] (0) at (0.5, 0) {};
		\node [style=none] (1) at (0.5, 0.3) {};
		\node [style=none] (2) at (0.5, -0.3) {};
				\node [style=none] (a) at (0.5, 0.5) {};
		\node [style=none] (b) at (0.5, -0.5) {};
	\end{pgfonlayer}
	\begin{pgfonlayer}{edgelayer}
		\draw (1.center) to (0);
		\draw (2.center) to (0);
	\end{pgfonlayer}
\end{tikzpicture}}}}^{(\mathrm{S3}'\mathrm{R})}
:= i\cdot \intf{\vcenter{\hbox{\begin{tikzpicture}
	\begin{pgfonlayer}{nodelayer}
		\node [style={H box}] (0) at (0.5, 0) {};
		\node [style=none] (1) at (0.5, 0.3) {};
		\node [style=none] (2) at (0.5, -0.3) {};
				\node [style=none] (a) at (0.5, 0.5) {};
		\node [style=none] (b) at (0.5, -0.5) {};
	\end{pgfonlayer}
	\begin{pgfonlayer}{edgelayer}
		\draw (1.center) to (0);
		\draw (2.center) to (0);
	\end{pgfonlayer}
\end{tikzpicture}}}}.
\]
We leave \(I,\sigma,\eta,\epsilon,e\) unchanged.

\begin{lemma}
For any ZX diagram \(D\),
\[
\intf{D}^{(\mathrm{S3}'\mathrm{R})}=c(D)\cdot\intf{D},
\]
where
\begin{equation}
\label{eq:scalar-cD}
c(D)=(i)^{\sum_{v\in Z(D)}(\deg(v)-2)}\cdot(-1)^{|X(D)|}\cdot(i)^{|H(D)|}.
\end{equation}
Here \(\deg(v)\) is the number of incident half-edges, with self-loops counted twice, and \(Z(D)\), \(X(D)\), and \(H(D)\) are the collections of \(Z\)-spider, \(X\)-spider, and Hadamard vertices in the diagram.
\end{lemma}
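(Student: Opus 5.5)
We argue by structural induction on the syntactic construction of \(D\) from the generators using \(\otimes\) and \(\circ\). The statement really asserts that \(D\mapsto c(D)\) is a well-defined function on diagram expressions that is multiplicative under both composition operations. The interpretation \(\intf{-}^{(\mathrm{S3}'\mathrm{R})}\) is defined by the same inductive clauses as \(\intf{-}\), with each generator multiplied by a scalar. So it suffices to check two things: the formula \eqref{eq:scalar-cD} agrees with the chosen scalar on each generator, and \(c\) is multiplicative under \(\otimes\) and \(\circ\).

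\textbf{Base cases.} For a green spider \(Z^k_{n,m}\) there is exactly one \(Z\)-vertex, of degree \(n+m\), and no \(X\)- or \(H\)-vertices. Hence \(c=i^{n+m-2}\), matching the definition. A red spider gives \(c=(-1)^1=-1\), and a Hadamard gives \(c=i^1=i\). The structural generators \(1_1,\sigma,\eta,\epsilon,e\) contain no vertices, so \(c=1\), and they were left unchanged.

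\textbf{Inductive step.} For \(D=D_1\otimes D_2\) and for \(D=D_2\circ D_1\), the vertex multisets \(Z(D)\), \(X(D)\), \(H(D)\) are disjoint unions of those of \(D_1\) and \(D_2\). Composition only joins boundary wires; it creates and destroys no vertices. It also does not change any vertex's degree, since a vertex's degree counts incident half-edges at that vertex, and gluing output wires of \(D_1\) to input wires of \(D_2\) leaves these unchanged. This holds even when a wire is traced through cups and caps into a self-loop, which is why self-loops are counted twice. Therefore the exponent sums in \eqref{eq:scalar-cD} are additive, and \(c(D)=c(D_1)c(D_2)\). Combining this with bilinearity of \(\otimes\) and \(\circ\) on matrices gives
\[\intf{D}^{(\mathrm{S3}'\mathrm{R})}=c(D_1)c(D_2)\,\intf{D_1}\ast\intf{D_2}=c(D)\intf{D},\]
where \(\ast\) denotes either operation.

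\textbf{Main obstacle.} The only delicate point is well-definedness: \(c(D)\) must depend only on the diagram, not on the chosen term representing it. The same diagram may be written by different terms, for instance via monoidal coherence or the meta-rule "only connectivity matters". Each of \(|X(D)|\), \(|H(D)|\) and the multiset of \(Z\)-vertex degrees is an invariant of the underlying open graph, so they are preserved by any such deformation, including yanking cups and caps. I would make this explicit by noting that all three quantities are read off the graph, not the term. This also shows that \(\intf{-}^{(\mathrm{S3}'\mathrm{R})}\) respects the meta-rule, which is needed later in the countermodel argument.
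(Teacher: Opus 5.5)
Your proposal is correct and follows essentially the same route as the paper's proof: structural induction, matching the generator scalars $i^{n+m-2}$, $-1$, $i$, $1$ against the formula, and using bilinearity to show the scalar is multiplicative under $\otimes$ and $\circ$. Your two additions make precise what the paper leaves brief. One is the explicit check that gluing neither creates vertices nor changes degrees. The other is the well-definedness discussion, which corresponds to the paper's remark after the lemma that $c(D)$ is invariant under graph isomorphism.
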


\begin{proof}
See \Cref{app:proof-lemma-3-2}.
\end{proof}

The scalar \(c(D)\) is invariant under graph isomorphism, so the connectivity meta-rule is sound.

Now, we can just compare the scalar \(c(D)\) on both sides of each rewrite, since the standard interpretation agrees with each of them.

\[
\begin{array}{@{}c|c@{}}
\toprule
\text{Rule} & \text{Scalar (verified on both sides)}\\
\midrule
(\mathrm{S1}) & i^{p+q+r+s-2}\\
(\mathrm{S3}'\mathrm{L}) & 1\\
(\mathrm{B1}) & 1\\
(\mathrm{B2}') & 1\\
(\mathrm{EU}') & i\\
(\mathrm{H}) & -1\cdot i^{n+m}\\
(\mathrm{IV}') & 1\\
(\mathrm{ZO}') & \text{both sides zero}\\
\bottomrule
\end{array}
\]

However, \(\intf{\cupDiagram}^{(\mathrm{S3}'\mathrm{R})}=\eta\), and \(\intf{\redCupDiagram}^{(\mathrm{S3}'\mathrm{R})}=-\eta\). Hence, (S3'R) is necessary.
\end{proof}

Physically, one can interpret this result as saying that we cannot assume the compact structure on the X-spider holds a priori.

\section{The Road to Bialgebra}\label{sec:road}

In this section, we derive the interpretation used for necessity of the Bialgebra law. For notational clarity, we use $H$ to mean the interpretation of the Hadamard, and $H_0$ to be the usual unnormalized $4\times4$ variant. Likewise, $Z^k_{n,m}$ and $X^k_{n,m}$ represent the $Z$ and $X$ interpretations throughout this section.

In this section, we will outline the process for deriving the bialgebra-separating interpretation presented in \Cref{sec:b2-necessity}, which is intended to be self-contained.

\subsection{The Target Category}

First, we will introduce our target category, and motivate these choices.

Consider the ring of dual numbers over $\F_5$,
\[
R=\F_5[\delta]/(\delta^2).
\]
Here, $\delta\neq0$ and $\delta^2=0$. One should think of $\delta$ as an infinitesimal element which we have added to $\F_5$. Note that $R$ is commutative.

Now, consider the free $R$-module $V=R^4$ with canonical basis $e_0,e_1,e_2,e_3$. Finally, define our target category $\mathcal C$ as follows. It is the strict symmetric monoidal category whose objects are natural numbers, with
\[
\mathcal C(n,m)=\Hom_R\!\left(V^{\otimes n},V^{\otimes m}\right),
\qquad V^{\otimes0}=R.
\]
Composition is ordinary composition of $R$-linear maps, and parallel composition is the usual Kronecker tensor product. We use the standard ordered-basis identification
\[
V^{\otimes n}\otimes V^{\otimes m}=V^{\otimes(n+m)}.
\]

Our structural generators are defined as
\[
\begin{array}{rcl}
\text{identity:}&I&=\id_V,\\
\text{braiding:}&\sigma(v\otimes w)&=w\otimes v,\\
\text{cup:}&\eta(1)&=\displaystyle\sum_{r=0}^{3}e_r\otimes e_r,\\
\text{cap:}&\epsilon(e_r\otimes e_s)&=\delta_{rs},\\
\text{unit:}&e&=1_R.
\end{array}
\]
These maps give the usual compact structure: for each basis vector $e_a$,
\[
(\epsilon\otimes I)(I\otimes\eta)(e_a)=e_a,
\qquad
(I\otimes\epsilon)(\eta\otimes I)(e_a)=e_a.
\]
Because \(V\) is finite free, the usual basis cup and cap give the expected compact structure.

We want to think of $\mathcal C$ as a way to generalize a qubit to four dimensions, rather than two, and to work over a finite field rather than the complex numbers, along with a new infinitesimal element.

Then, the motivation to work over both a finite field and to introduce the infinitesimal is similar. Both features introduce simplifications to the algebraic constraints imposed by the rules, particularly $(EU')$. The element $\delta$ gives the most impactful simplification, since it turns many complicated nonlinear equations into linear ones, as we will soon see, since all terms higher than $\delta^2$ reduce to $0$.

\subsection{Green spiders}

Now, we will introduce our interpretations of the Hadamard, Green, and Red spiders in our category. These are all defined analogously to their standard interpretation, augmented to dimension $4$.

For each $k\in\Z$, define
\[
Z^k_{n,m}:V^{\otimes n}\longrightarrow V^{\otimes m}
\]
by
\[
Z^k_{n,m}=\sum_{r=0}^{3}t_r^k
\ket{e_r^{\otimes m}}\bra{e_r^{\otimes n}},
\qquad e_r^{\otimes0}=1.
\]
We leave the $t_r$ unspecified, but we see that they must be units of $R$ to satisfy $(S1)$.

We write
\[
t=a+\delta p,
\qquad a,p\in\F_5^4.
\]
In the standard interpretation,
\[
P(\pi/2)=\diag(1,\iota),
\qquad \iota^2=-1.
\]
So, in our doubled-dimensional interpretation over $\F_5$,
\[
P(\pi/2)\otimes P(\pi/2)=\diag(1,\iota,\iota,-1).
\]
Since $3^2=-1$ in $\F_5$, we choose $\iota=3$. Thus, a good choice is
\[
a=(1,3,3,4),
\]
and it remains only to solve $p$, so
\[
t=
\begin{pmatrix}
1+\delta p_0\\
3+\delta p_1\\
3+\delta p_2\\
4+\delta p_3
\end{pmatrix}.
\]
Furthermore, we can calculate the inverse tuple:
\[
t^{-1}=
\begin{pmatrix}
1+4\delta p_0\\
2+\delta p_1\\
2+\delta p_2\\
4+4\delta p_3
\end{pmatrix}.
\]

\subsection{The Hadamard}

Write
\[
H=\frac12\left(H_0+\delta H_1\right),
\qquad
H_0=
\begin{pmatrix}
1&1\\
1&4
\end{pmatrix}^{\!\otimes2},
\qquad
H_1\in M_4(\F_5).
\]
Incidentally, this is why we chose to work in the field \(\mathbb F_5\) rather than \(\mathbb F_2\) or \(\mathbb F_3\): it is the first finite field where \(2\) is invertible and \(-1\) has a square root.

Here \(4=-1\) in \(\F_5\). Thus \(H_0\) is the usual
\(4\times4\) Sylvester, or Walsh--Hadamard, matrix
\cite[Section~2.1]{cameron_hadamard}.
Equivalently, after identifying \(4\) with \(-1\), this is the character
table of the elementary abelian group \((\mathbb Z/2)^2\).

The term \(\delta H_1\) should be viewed as a first-order deformation
of this ordinary Walsh--Hadamard matrix.

Now, we impose the following constraints on the Hadamard and phase tuple, in order to satisfy connectivity and the structural rules. We require that the Hadamard is symmetric and involutive:
\[
H^T=H,
\qquad
H^2=I.
\]
We already know that $H_0^T=H_0$, so
\[
H_1^T=H_1.
\]
Also, since $H_0^2=4I$ and $\delta^2=0$,
\[
H^2
=\frac14\left(H_0^2+\delta(H_0H_1+H_1H_0)\right)
=I+\frac14\delta(H_0H_1+H_1H_0).
\]
And, since $\delta\neq0$,
\[
H_0H_1+H_1H_0=0.
\]
Intuitively, this construction is such that the matrix \(H_0+\delta H_1\)
produces an involutive \(H=\frac12(H_0+\delta H_1)\). The later failure of \((B2')\)
is therefore detected only in the first-order \(\delta\)-part of the
model.

\subsection{Red spiders}

We simply give the Red spiders by Hadamard conjugation, as usual:
\[
X^k_{n,m}=H^{\otimes m}\circ Z^k_{n,m}\circ H^{\otimes n}.
\]

\subsection{The structural rules}

Assuming such $p$ and $H_1$ exist, we have the following rules satisfied:
\[
\text{connectivity},
\qquad
(S1),
\qquad
(\mathrm{S3}'\mathrm{L}),
\qquad
(\mathrm{S3}'\mathrm{R}),
\qquad
(H).
\]
These all hold by construction. See \Cref{sec:b2-necessity} and the \hyperref[app:proof-theorem-5-6]{appendix} for detailed computations.

Now it remains to extract constraints from the remaining rules, and crucially $(B2')$, to specify the Hadamard and phases.

\subsection{The copy rule $(B1)$}

The copy rule will determine the Hadamard up to a scalar factor.

First, we can translate the copy $(B1)$ rule into an algebraic equation. Writing
\[
u=e_0+e_1+e_2+e_3,
\qquad
\Delta(e_r)=e_r\otimes e_r,
\qquad
\rho=Hu=X^0_{0,1},
\]
we have
\[
(u^T\rho)\,\Delta(\rho)=\rho\otimes\rho.
\]
It is easy to check that this is satisfied if and only if
\[
H_1u=0.
\]
Equivalently, $(B1)$ is satisfied if and only if every row and column sum of $H_1$ vanishes.

This is enough to determine $H_1$ up to a scalar. Write
\[
H_1=
\begin{pmatrix}
a&b&c&d\\
b&e&f&g\\
c&f&h&i\\
d&g&i&j
\end{pmatrix}.
\]
Since $H_0H_1+H_1H_0=0$ and $H_1u=0$, we get
\[
a=b=c=d=f=j=0,
\qquad
 i=e,
\qquad
 g=h=-e=: \lambda\in\F_5.
\]
Then,
\[
H_1=\lambda
\begin{pmatrix}
0&0&0&0\\
0&4&0&1\\
0&0&1&4\\
0&1&4&0
\end{pmatrix}.
\]
Thus, we have $H$ determined up to some $\lambda\in\F_5$.

\subsection{The Euler rule $(EU')$}

Now, we can perform the same process for $(EU')$, which similarly specifies our phases up to scalar. We write
\[
D=\diag(t_0,t_1,t_2,t_3)
\]
and note that the green multiplication acts by coordinatewise multiplication:
\[
Z^0_{2,1}(v\otimes w)=\sum_{r=0}^{3}v_rw_r e_r.
\]
Thus, the matrix equation for $(EU')$ is
\[
D H\,\diag(Ht^{-1})\,H D=H.
\]
Substituting $D$, $H$, and $t$, and comparing both sides, we extract the following equations from the diagonal:
\[
\begin{cases}
3p_0=0,\\
3p_0+3p_1+4\lambda=0,\\
3p_0+3p_2+\lambda=0,\\
2p_0+4p_3=0.
\end{cases}
\]
Therefore, we can specify our phases up to the Hadamard scalar $\lambda$:
\[
p_0=0,
\qquad
p_1=2\lambda,
\qquad
p_2=3\lambda,
\qquad
p_3=0.
\]
Therefore, we obtain the phase and its inverse:
\[
t=
\begin{pmatrix}
1\\
3+2\lambda\delta\\
3+3\lambda\delta\\
4
\end{pmatrix},
\qquad
t^{-1}=
\begin{pmatrix}
1\\
2+2\lambda\delta\\
2+3\lambda\delta\\
4
\end{pmatrix},
\qquad \lambda\in\F_5.
\]
One can check that both $(ZO')$ and $(IV')$ hold for any $\lambda\in\F_5$.

\subsection{The bialgebra rule $(B2')$}
\label{sec:bialgebra-rule}

Finally, it remains to choose such a $\lambda$ which breaks the bialgebra law. We write
\[
F:=X^0_{2,1}=H\circ Z^0_{2,1}\circ(H\otimes H)
\]
for the Red multiplication, and $\Delta=Z^0_{1,2}$. Then, our matrix equation for the bialgebra is
\[
4(F\otimes F)(I\otimes\sigma\otimes I)(\Delta\otimes\Delta)
=2\Delta F.
\]
Now, evaluating both sides on $e_1\otimes e_1$ and comparing the $e_0\otimes e_2$ coefficients\daggerfootnote{See the \hyperref[app:bialgebra-matrices]{appendix} for the full \(16\times16\) matrices. The \((e_1\otimes e_1,e_0\otimes e_2)\) coordinate is just one convenient coordinate for comparison.}, we find that
\[
\lambda\delta=0.
\]
Now, $\delta\neq0$, so for bialgebra to hold, $\lambda=0$. Therefore, we can choose any $\lambda\neq0$, and we take $\lambda=1$ for simplicity.

So, we find a Hadamard and phase tuple which suffice:
\[
H=\frac12(H_0+\delta H_1),
\qquad
H_1=
\begin{pmatrix}
0&0&0&0\\
0&4&0&1\\
0&0&1&4\\
0&1&4&0
\end{pmatrix},
\]
\[
t=(1,3+2\delta,3+3\delta,4),
\qquad
p=(0,2,3,0).
\]

In the next section, we introduce the formally defined interpretation and compute each axiom.

\section{On the Necessity of (B2')}\label{sec:b2-necessity}

We work in the strict symmetric monoidal category $\mathcal C$ outlined in \Cref{sec:road}. Explicitly, let
\[
R:=\F_5[\delta]/(\delta^2)
\]
with $\delta\neq0$ and $\delta^2=0$, and let $V:=R^4$ with canonical basis $e_0,e_1,e_2,e_3$. The objects of $\mathcal C$ are natural numbers, with
\[
\mathcal C(n,m)=\Hom_R\!\left(V^{\otimes n},V^{\otimes m}\right),
\qquad V^{\otimes0}=R.
\]
Composition is ordinary composition of $R$-linear maps, and parallel composition is the usual Kronecker tensor product. We use the standard ordered-basis identification
\[
V^{\otimes n}\otimes V^{\otimes m}=V^{\otimes(n+m)}.
\]
The structural maps are
\[
\begin{aligned}
I&=\id_V,
&\sigma(v\otimes w)&=w\otimes v,\\
\eta(1)&=\sum_{r=0}^{3}e_r\otimes e_r,
&\epsilon(e_r\otimes e_s)&=\delta_{rs},
&e&=1_R.
\end{aligned}
\]

\begin{definition}[Green spiders]
Let
\[
t=(t_0,t_1,t_2,t_3)=(1,3+2\delta,3+3\delta,4),
\qquad
t^{-1}=(1,2+2\delta,2+3\delta,4).
\]
For each $k\in\Z$, define
\[
Z^k_{n,m}:V^{\otimes n}\longrightarrow V^{\otimes m}
\]
by
\[
Z^k_{n,m}
=\sum_{r=0}^{3}t_r^k
\ket{e_r^{\otimes m}}\bra{e_r^{\otimes n}},
\qquad e_r^{\otimes0}=1.
\]
Equivalently, on basis elements,
\[
Z^k_{n,m}(e_{a_1}\otimes\cdots\otimes e_{a_n})
=
\begin{cases}
t_r^k e_r^{\otimes m},&a_1=\cdots=a_n=r,\\
0,&\text{otherwise}.
\end{cases}
\]
Since each $t_r$ is a unit of $R$, $t_r^k$ is defined for all $k\in\Z$.
\end{definition}

The interpretation below is an interpretation of the integer-labelled syntax of Backens--Perdrix--Wang \cite{backens2020towards} described in Section 2, so it does not factor through the quotient identifying \(k\sim k+4\). Indeed,
\[
(3+2\delta)^4 = 1+\delta \neq 1,
\qquad
(3+3\delta)^4 = 1+4\delta \neq 1.
\]
This convention is compatible with our interpretation, since \(\ZXsimp\setminus\{(B2')\}\) is a ruleset in which \(2\pi\)-periodicity is not assumed syntactically.

\begin{definition}[Hadamard]
Let
\[
H_0=
\begin{pmatrix}
1&1\\
1&4
\end{pmatrix}^{\!\otimes2}
=
\begin{pmatrix}
1&1&1&1\\
1&4&1&4\\
1&1&4&4\\
1&4&4&1
\end{pmatrix},
\qquad
H_1=
\begin{pmatrix}
0&0&0&0\\
0&4&0&1\\
0&0&1&4\\
0&1&4&0
\end{pmatrix},
\]
and define
\[
H:=\frac12\left(H_0+\delta H_1\right).
\]
By direct computation,
\[
H^T=H,
\qquad
H^2=I.
\]
\end{definition}

\begin{definition}[Red spiders]
For each $k\in\Z$, define
\[
X^k_{n,m}:=H^{\otimes m}\circ Z^k_{n,m}\circ H^{\otimes n}.
\]
\end{definition}

\begin{definition}[Interpretation functor]
For any ZX diagram $D:n\to m$, define
\[
\llbracket D\rrbracket^{(B2')}:V^{\otimes n}\longrightarrow V^{\otimes m}
\]
inductively by
\[
\llbracket D_1\otimes D_2\rrbracket^{(B2')}
=\llbracket D_1\rrbracket^{(B2')}\otimes\llbracket D_2\rrbracket^{(B2')},
\qquad
\llbracket D_2\circ D_1\rrbracket^{(B2')}
=\llbracket D_2\rrbracket^{(B2')}\circ\llbracket D_1\rrbracket^{(B2')},
\]
and on generators by
\[
\begin{gathered}
\semzx{generator-identity}=I
\qquad
\semzx{generator-braiding}=\sigma
\\[2.1ex]
\semzx{generator-cup}=\eta
\qquad
\semzx{generator-cap}=\epsilon
\\[4.2ex]
  \intf{{\let\alpha k\vcenter{\hbox{\begin{tikzpicture}\footnotesize
	\begin{pgfonlayer}{nodelayer}
		\node [style=none] (0) at (-0.5, 0.5) {};
		\node [style=none] (1) at (-1, 0.5) {};
		\node [style=none] (2) at (0, 0.45) {...};
		\node [style=none] (3) at (0, -0.45) {...};
		\node [style=none] (4) at (0.5, -0.5) {};
		\node [style=none] (5) at (0.75, 0.75) {};
		\node [style=none] (6) at (0.75, -0.75) {};
		\node [style=gn] (7) at (-0.25, -0) {$~\alpha~$};
		\node [style=none] (8) at (-1.25, 0.75) {};
		\node [style=none] (9) at (-1.25, -0.75) {};
		\node [style=none] (10) at (-1, -0.5) {};
		\node [style=none] (11) at (0.5, 0.5) {};
		\node [style=none] (12) at (-0.5, -0.5) {};
	\end{pgfonlayer}
	\begin{pgfonlayer}{edgelayer}
		\draw [style=braceedge] (6.center) to node[wire label, inner sep={5 pt}]{$m$} (9.center);
		\draw [style=none, bend left=15, looseness=1.00] (10.center) to (7);
		\draw [style=none, bend left=15, looseness=1.00] (12.center) to (7);
		\draw [style=braceedge] (8.center) to node[wire label, inner sep={5 pt}]{$n$} (5.center);
		\draw [style=none, bend left=15, looseness=1.00] (7) to (1.center);
		\draw [style=none, bend right=15, looseness=1.00] (4.center) to (7);
		\draw [style=none, bend right=15, looseness=1.00] (7) to (11.center);
		\draw [style=none, bend left=15, looseness=1.00] (7) to (0.center);
	\end{pgfonlayer}
\end{tikzpicture}}}}}^{(B2')}=Z^k_{n,m}
  \qquad
  \intf{{\let\alpha k\vcenter{\hbox{\begin{tikzpicture}\footnotesize
	\begin{pgfonlayer}{nodelayer}
		\node [style=none] (0) at (1.25, -0.5) {};
		\node [style=none] (1) at (2.25, 0.5) {};
		\node [style=none] (2) at (0.5, -0.75) {};
		\node [style=none] (3) at (0.75, 0.5) {};
		\node [style=none] (4) at (0.5, 0.75) {};
		\node [style=rn] (5) at (1.5, 0) {$~\alpha~$};
		\node [style=none] (6) at (2.5, -0.75) {};
		\node [style=none] (7) at (1.75, -0.45) {...};
		\node [style=none] (8) at (0.75, -0.5) {};
		\node [style=none] (9) at (2.25, -0.5) {};
		\node [style=none] (10) at (1.25, 0.5) {};
		\node [style=none] (11) at (1.75, 0.45) {...};
		\node [style=none] (12) at (2.5, 0.75) {};
	\end{pgfonlayer}
	\begin{pgfonlayer}{edgelayer}
		\draw [style=braceedge] (6.center) to node[wire label, inner sep={5 pt}]{$m$} (2.center);
		\draw [style=none, bend left=15, looseness=1.00] (8.center) to (5);
		\draw [style=none, bend left=15, looseness=1.00] (0.center) to (5);
		\draw [style=braceedge] (4.center) to node[wire label, inner sep={5 pt}]{$n$} (12.center);
		\draw [style=none, bend left=15, looseness=1.00] (5) to (3.center);
		\draw [style=none, bend right=15, looseness=1.00] (9.center) to (5);
		\draw [style=none, bend right=15, looseness=1.00] (5) to (1.center);
		\draw [style=none, bend left=15, looseness=1.00] (5) to (10.center);
	\end{pgfonlayer}
\end{tikzpicture}}}}}^{(B2')}=X^k_{n,m}
  \\[4.2ex]
\semzx{generator-hadamard}=H
\qquad
\semzx{generator-unit}=e.
\end{gathered}
\]
\end{definition}

\begin{lemma}[Connectivity]
The interpretation $\llbracket-\rrbracket^{(B2')}$ respects the connectivity meta-rule of Backens--Perdrix--Wang \cite{backens2020towards} ``only connectivity matters.''
\end{lemma}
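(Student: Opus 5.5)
The plan is to use the standard characterisation of ``only connectivity matters'': an interpretation into a compact closed (dagger-free) symmetric monoidal category respects the meta-rule exactly when every generator is \emph{flexsymmetric}. This means that bending wires with $\eta,\epsilon$ and permuting legs with $\sigma$ leaves the interpretation unchanged. So I would first record that $\mathcal C$, with the given $\sigma,\eta,\epsilon$, is a strict symmetric monoidal category with a self-dual compact structure. The snake equations were already checked above. Because $\eta$ and $\epsilon$ are defined by the basis, the transpose induced by the compact structure is the ordinary matrix transpose. Also $\sigma\circ\eta=\eta$ and $\epsilon\circ\sigma=\epsilon$. In this setting, a diagram's interpretation depends only on its underlying open graph, up to isomorphism fixing boundary order, provided each generator $f:n\to m$ has the following property. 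Bending any input to an output, or vice versa, yields exactly the generator of the new type with the same label. Permuting its inputs among themselves, or its outputs among themselves, leaves it unchanged.

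Next I would verify this generator by generator. For $Z^k_{n,m}=\sum_r t_r^k\ket{e_r^{\otimes m}}\bra{e_r^{\otimes n}}$, the bent map is computed by contracting one leg against $\sum_s e_s\otimes e_s$. Since the tensor is supported on the diagonal $a_1=\dots=a_{n+m}=r$, bending gives $\sum_r t_r^k\ket{e_r^{\otimes (m+1)}}\bra{e_r^{\otimes (n-1)}}=Z^k_{n-1,m+1}$, and similarly in the other direction. Permutation invariance is immediate because the tensor is fully symmetric. For $H$, flexsymmetry reduces to $H^T=H$, which was noted in the definition, so the cup and cap bent through $H$ agree on both sides. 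For the red spiders $X^k_{n,m}=H^{\otimes m}Z^k_{n,m}H^{\otimes n}$, bending an input wire through a leg carrying $H$ gives $H^T=H$ on the new output leg. The result is therefore $H^{\otimes(m+1)}Z^k_{n-1,m+1}H^{\otimes(n-1)}=X^k_{n-1,m+1}$. Symmetry under leg permutations again follows from the symmetry of $Z$ and from $\sigma(H\otimes H)=(H\otimes H)\sigma$. The structural generators $I,\sigma,\eta,\epsilon,e$ are trivially compatible.

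Finally I would check edge cases. These are legs bent through self-loops, which the partial trace handles via the $\epsilon$-contraction, and spiders with $n=0$ or $m=0$, where $e_r^{\otimes 0}=1$. Then I would conclude using the general theorem: in any symmetric compact closed category, a functor from the free ZX prop sending generators to flexsymmetric morphisms factors through the quotient by diagram isotopy and connectivity.

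The main obstacle is making this last step precise, since the meta-rule is stated informally. I would cite the known result that the free symmetric compact closed category on flexsymmetric generators is presented by open graphs. Failing that, I would argue directly: every diagram can be put in a normal form, namely generators tensored together, followed by a permutation, followed by cups and caps. Two diagrams with the same connectivity differ only by symmetric compact-closed moves, together with leg permutations and bendings at individual generators, and these are exactly the moves checked above.
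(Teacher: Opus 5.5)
Your proposal is correct and follows essentially the same route as the paper. Both arguments check that the cup and cap form the standard basis compact structure, that green spiders are invariant under leg permutation and bending because they are supported on the diagonal, and that $H^T=H$ lets Hadamard boxes bend, so red spiders $X^k_{n,m}=H^{\otimes m}Z^k_{n,m}H^{\otimes n}$ inherit these properties. Your version adds the explicit checks $\sigma\circ\eta=\eta$ and $\epsilon\circ\sigma=\epsilon$ and names the flexsymmetry meta-theorem, which the paper leaves implicit.
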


\begin{proof}
The structural maps are the usual compact structure for the basis $(e_0,e_1,e_2,e_3)$:
\[
\eta(1)=\sum_{r=0}^{3}e_r\otimes e_r,
\qquad
\epsilon(e_r\otimes e_s)=\delta_{rs},
\]
so the yanking equations hold:
\[
(\epsilon\otimes I)(I\otimes\eta)(e_a)=e_a,
\qquad
(I\otimes\epsilon)(\eta\otimes I)(e_a)=e_a.
\]
For green spiders,
\[
Z^k_{n,m}=\sum_{r=0}^{3}t_r^k
\ket{e_r^{\otimes m}}\bra{e_r^{\otimes n}}.
\]
This formula is unchanged by permuting input legs or output legs. It is also unchanged by bending wires, since the cup and cap identify $e_r$ with the dual basis vector $e_r^*$.

The Hadamard satisfies
\[
H^T=H,
\qquad
H^2=I.
\]
Hence Hadamard boxes can also be bent through the compact structure. Finally, red spiders are defined by
\[
X^k_{n,m}=H^{\otimes m}Z^k_{n,m}H^{\otimes n},
\]
so the same permutation and bending properties pass from green spiders to red spiders.

Therefore the interpreted map depends only on the labelled multigraph of the diagram, with the external input and output labels fixed. This is exactly the connectivity meta-rule in \cite{backens2020towards}.
\end{proof}

\begin{theorem}
\label{thm:b2-necessary}
The $(B2')$ rule is necessary: $\ZXsimp\setminus\{(B2')\}\nvdash(B2')$.
\end{theorem}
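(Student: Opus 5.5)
The plan is the standard soundness argument for a countermodel, applied to the interpretation $\llbracket-\rrbracket^{(B2')}$ into $\mathcal C$. Because the interpretation is defined inductively and is a strict symmetric monoidal functor, it suffices to show three things. First, every rule of $\ZXsimp\setminus\{(B2')\}$ holds as an equation of $R$-linear maps. Second, the connectivity meta-rule is respected. Third, the two sides of $(B2')$ are interpreted by different maps. Then any derivation of $(B2')$ from the remaining rules would force equal interpretations, which is a contradiction. The Connectivity lemma above already handles the meta-rule, so what remains is to check the eight rules and to exhibit the failure.

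For the structural rules I would argue directly from the definitions. For (S1), compose two green spiders along one wire. Since $\langle e_r|e_s\rangle=\delta_{rs}$, the composite is $\sum_r t_r^{k}t_r^{\ell}\ket{e_r^{\otimes(q+s)}}\bra{e_r^{\otimes(p+r)}}$, and $t_r^kt_r^\ell=t_r^{k+\ell}$ for all integers because each $t_r$ is a unit. For (S3'L), note that $Z^0_{0,2}=\sum_r e_r\otimes e_r=\eta$. For (S3'R), compute $X^0_{0,2}=(H\otimes H)\eta=\sum_r He_r\otimes He_r$, which corresponds to the matrix $HH^T=H^2=I$, so it equals $\eta$. (H) holds by definition of $X$ together with $H^2=I$. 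For (B1), I would verify $H_1u=0$ by checking that each row of $H_1$ sums to $0$ in $\F_5$. By the computation in Section 4, this is equivalent to the copy equation, which is $(u^T\rho)\Delta(\rho)=\rho\otimes\rho$ with $\rho=Hu$. Concretely, $H_0u=4e_0$ and $H_1u=0$, so $\rho=2e_0$ and $u^T\rho=2$. Then both sides equal $4\,e_0\otimes e_0$.

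The remaining rules (EU'), (ZO') and (IV') are finite computations over $R$. For (ZO'), $Z^2_{0,0}=\sum_r t_r^2$. Modulo $\delta$ this is $1+4+4+1=0$. The $\delta$-part is $2\cdot3\cdot2+2\cdot3\cdot3=30=0$. So both sides are $0$. For (IV'), I would compute $Z^0_{3,0}X^0_{0,3}$. Here $X^0_{0,3}=\sum_r (He_r)^{\otimes3}$, so the scalar is $\sum_{r,s}H_{sr}^3$. I would evaluate this to first order in $\delta$, multiply by $Z^0_{0,0}=4$, and check that the product of the three scalars is $1$. Since this is a scalar identity, it is short. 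For (EU'), I would check the matrix identity $DH\,\diag(Ht^{-1})\,HD=H$ with $D=\diag(t)$ in full, not only on its diagonal. I would split it into its $\delta^0$-part, which is the doubled standard Euler decomposition over $\F_5$ with $\iota=3$, and its $\delta^1$-part, which is linear in $H_1$ and $p$. The $\delta^1$-part is $16$ linear entries, and the values $p=(0,2,3,0)$ were derived from the diagonal ones.

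Finally, for the failure of $(B2')$, I would compute both sides on $e_1\otimes e_1$ and extract the $e_0\otimes e_2$ coefficient. On the left, $(\Delta\otimes\Delta)(e_1\otimes e_1)=e_1^{\otimes4}$, so the left side is $4\,F(e_1\otimes e_1)\otimes F(e_1\otimes e_1)$, where $F(e_1\otimes e_1)=H\,\diag(He_1)He_1$. On the right, we get $2\Delta F(e_1\otimes e_1)$, whose support lies only on $e_r\otimes e_r$, so its $e_0\otimes e_2$ coefficient is $0$. Thus I need the $e_0\otimes e_2$ coefficient of the left side, which is $4F_0F_2$ with $F=F(e_1\otimes e_1)$, and I must show it is a nonzero multiple of $\delta$. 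I expect this to be $\lambda\delta$ with $\lambda=1$. The main obstacle is carrying out this first-order calculation, together with the full (EU') check, without arithmetic slips modulo $5$. I would organize it by first computing $He_1$ to first order, then its coordinatewise square, then applying $H$ once more, and I would keep the $\delta^0$ and $\delta^1$ parts separate throughout.
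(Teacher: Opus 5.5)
Your proposal is correct and follows the paper's own proof: the same $\F_5[\delta]/(\delta^2)$ countermodel, a rule-by-rule check of $\ZXsimp\setminus\{(B2')\}$ together with the connectivity lemma, and the same witness coordinate $(e_1\otimes e_1,\,e_0\otimes e_2)$. The arithmetic you left open comes out as you expect. $F(e_1\otimes e_1)=3e_0+3\delta e_2+2\delta e_3$, so the left-hand coefficient is $4\cdot 3\cdot 3\delta=\delta$. The (IV') dumbbell equals $3$, giving $4\cdot 3^2=1$. For (EU') the paper shortens your entrywise check by observing $Ht^{-1}=2t$, which reduces the identity to $D(2H)D(2H)D=2H$.
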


\begin{proof}
We evaluate $\llbracket-\rrbracket^{(B2')}$ on each rule to find agreement on all $\ZXsimp\setminus\{(B2')\}$ rules (see \cref{app:proof-theorem-5-6} for details).

It is enough to find one basis vector and one coefficient on which the two sides disagree.

However, in the case of $(B2')$, evaluating both sides on $e_1\otimes e_1$ and comparing the $e_0\otimes e_2$ coefficient\ddaggerfootnote{As before, see the \hyperref[app:bialgebra-matrices]{appendix} for the full \(16\times16\) matrices, where we have highlighted the \((e_1\otimes e_1,e_0\otimes e_2)\) coordinate.} gives
\[
\semzx{b2prime-lhs}
\text{ has $e_0\otimes e_2$ coefficient $\delta$,}\quad
\text{and}\quad
\semzx{b2prime-rhs}
\text{ has $e_0\otimes e_2$ coefficient $0$.}
\]
But, $\delta\neq0$ in $R$, so the two sides do not agree. Therefore, $\mathrm{LHS}\neq\mathrm{RHS}$ and $(B2')$ fails.

\[
\text{Thus, }\ZXsimp\setminus\{(B2')\}\nvdash(B2').
\]
\end{proof}

The independence of $(B2')$ is particularly notable, since this rule expresses the bialgebra interaction between the two observables, and hence encodes a core complementarity principle of the calculus.

\begin{corollary}
\label{cor:minimality}
Every rule in $\ZXsimp$ is necessary relative to the connectivity meta-rule of Backens--Perdrix--Wang \cite{backens2020towards}. Hence, relative to the meta-rule, $\ZXsimp$ has no redundant rewrite rule.
\end{corollary}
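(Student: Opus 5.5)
The corollary is essentially bookkeeping, and I would prove it by combining the three necessity results already in hand. By the theorem of Backens--Perdrix--Wang quoted in \Cref{sec:zx-simp} (their Lemmas~3.1--3.8), every rule of \(\ZXsimp\) other than (S3'R) and (B2') is necessary relative to the connectivity meta-rule. \Cref{thm:s3r-necessary} shows that (S3'R) is necessary, and \Cref{thm:b2-necessary} shows that (B2') is necessary. Since \(\ZXsimp\) consists of exactly the nine rules in \Cref{tab:zx-simp}, these three results together cover every rule, so none can be derived from the remaining eight.

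To state this carefully, I would first fix what ``necessary relative to the meta-rule'' means. For a rule \(r\), it means that
\[
(\ZXsimp\setminus\{r\})+\text{connectivity}\nvdash r.
\]
I would prove this uniformly for each rule by exhibiting a countermodel: an interpretation into a compact closed (strict symmetric monoidal) category that
\begin{itemize}
\item respects the meta-rule,
\item satisfies every rule in \(\ZXsimp\setminus\{r\}\), and
\item falsifies \(r\).
\end{itemize}
Soundness of derivations in such a model is immediate. Every derivation step is either a rule instance applied inside a context built from \(\circ\) and \(\otimes\), or a connectivity move. The interpretation is a monoidal functor, so rule instances in context preserve the interpreted value, and the model respects connectivity by hypothesis. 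Hence anything derivable from \(\ZXsimp\setminus\{r\}\) holds in the model, so \(r\) is not derivable.

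For \(r=\)(S3'R), the scalar-twisted interpretation \(\intf{-}^{(\mathrm{S3}'\mathrm{R})}\) serves as the countermodel. Connectivity follows because \(c(D)\) is a graph invariant. For \(r=\)(B2'), the interpretation \(\llbracket-\rrbracket^{(B2')}\) over \(\F_5[\delta]/(\delta^2)\) serves, with connectivity supplied by the Connectivity lemma. For the other seven rules, I would simply cite Backens--Perdrix--Wang.

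The only point needing care, and the main obstacle in the sense of rigor, is that the cited lemmas and the two new theorems all use the same notion of derivability. In particular, they must all use the integer-labelled phase syntax without a built-in \(k\sim k+4\) quotient, and the same connectivity meta-rule. I would check this explicitly, noting that the (B2') model uses integer labels non-periodically and that the paper adopts the BPW conventions verbatim. Once that is confirmed, the claim that \(\ZXsimp\) has no redundant rewrite rule is simply a restatement of the fact that each rule is necessary.
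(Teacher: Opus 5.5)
Your proposal is correct and takes the same route as the paper. The paper's proof is exactly this combination: the Backens--Perdrix--Wang lemmas cover the seven other rules, and Theorems~\ref{thm:s3r-necessary} and~\ref{thm:b2-necessary} cover (S3'R) and (B2'). Your added remarks spell out two points the paper leaves implicit: why a connectivity-respecting countermodel witnesses non-derivability, and why all results must share the integer-labelled syntax and the same meta-rule (which the paper fixes in Section~\ref{sec:zx-simp}).
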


\begin{proof}
Immediate by \cite{backens2020towards} and the two theorems above.
\end{proof}

\section*{Acknowledgements}

I thank Miriam Backens and Simon Perdrix for helpful discussions about this work and for checking the computations in an earlier version of this paper.

\section*{Author Contributions}

The author is the sole author of this work and carried out all scientific aspects of the project, including the mathematical arguments, calculations, writing, and revision of the manuscript. Large language model tools were used to assist with formatting and grammar. The author is solely responsible for the work.

\bibliographystyle{quantum}
\bibliography{references}

\appendix

\section{Appendix}
\label{sec:appendix}

\subsection{Proof of Lemma 3.2}
\label{app:proof-lemma-3-2}

\begin{proof}
Let \(D\) be a ZX diagram.

We write \(\intf{g}^{(\mathrm{S3}'\mathrm{R})}=c(g)\cdot\intf{g}\) for each generator \(g\).
By definition of \(\intf{-}^{(\mathrm{S3}'\mathrm{R})}\),
\[
c(Z^\alpha_{n,m})=i^{n+m-2},\quad
c(X^\beta_{n,m})=-1,\quad
c(H)=i,
\]
and
\[
c(I)=c(\sigma)=c(\eta)=c(\epsilon)=c(e)=1.
\]
If \(D=D_2\circ D_1\), then
\[
\begin{aligned}
\intf{D}^{(\mathrm{S3}'\mathrm{R})}
&=\intf{D_2\circ D_1}^{(\mathrm{S3}'\mathrm{R})}
=\intf{D_2}^{(\mathrm{S3}'\mathrm{R})}\intf{D_1}^{(\mathrm{S3}'\mathrm{R})}\\
&=c(D_2)c(D_1)\intf{D_2}\circ\intf{D_1}\\
&=c(D_2)c(D_1)\intf{D}.
\end{aligned}
\]
If \(D=D_1\otimes D_2\), then
\[
\begin{aligned}
\intf{D}^{(\mathrm{S3}'\mathrm{R})}
&=\intf{D_1\otimes D_2}^{(\mathrm{S3}'\mathrm{R})}
=\intf{D_1}^{(\mathrm{S3}'\mathrm{R})}\otimes\intf{D_2}^{(\mathrm{S3}'\mathrm{R})}\\
&=c(D_1)\cdot c(D_2)\cdot\intf{D_1}\otimes\intf{D_2}\\
&=c(D_1)c(D_2)\intf{D}.
\end{aligned}
\]
So, the total scalar for \(D\) is the product of the local scalars of the non-structural vertices of \(D\).

That is, we can simply count \& multiply together the factors from our nontrivial generators.

Each green spider \(v=Z^\alpha_{n,m}\) contributes \(i^{n+m-2}=i^{\deg(v)-2}\).

As a collection they contribute:
\[
\prod_{v\in Z(D)} i^{\deg(v)-2}
=i^{\sum_{v\in Z(D)}(\deg(v)-2)}.
\]
Each red spider contributes a factor of \(-1\), so as a collection they contribute:
\[
(-1)^{|X(D)|}.
\]
Each Hadamard contributes \(i\), so as a collection they contribute:
\[
(i)^{|H(D)|}.
\]
Combining these results, we get:
\[
c(D)=i^{\sum_{v\in Z(D)}(\deg(v)-2)}\cdot(-1)^{|X(D)|}\cdot(i)^{|H(D)|}.
\]
\end{proof}

\subsection{Proof of Theorem 5.6}
\label{app:proof-theorem-5-6}

\emph{Note:} Maps \(0\to m\) are determined by their value on \(1\). In the other cases, we decide equality by evaluating on the relevant basis vectors or by direct matrix computation.

\begin{proof}
\leavevmode\par
\begin{itemize}[leftmargin=*,itemsep=1.25em]
\item \textbf{(S1).}
By definition 5.1, \(Z_{n,m}^k\) is the basis-copying spider with pointwise coefficient \(t_a^k\).

So, we can apply the generalized spider theorem for Frobenius algebras \cite[Theorem~4.11]{coecke2011interacting}.

Therefore,
\[
\begin{aligned}
\intf{\sOneLhsDiagram}^{(\mathrm{B2}')}
&=(\mathrm{id}_q\otimes Z_{r+1,s}^\ell)\circ
\big(Z_{p,q+1}^k\otimes \mathrm{id}_r\big)\\
&=Z_{p+r,q+s}^{k+\ell}
=\intf{\sOneRhsDiagram}^{(\mathrm{B2}')}.
\end{aligned}
\]
Concretely, the connecting wire forces a common basis label \(a\), \& the scalar on that label is
\[
t_a^k t_a^\ell=t_a^{k+\ell},
\]
while all non-matching labels vanish on both sides.

\item \textbf{(S3'L).}
Evaluating on \((1)\) on both sides,
\[
\intf{\cupDiagram}^{(\mathrm{B2}')}(1)
=\eta(1)=\sum_{r=0}^3 e_r\otimes e_r,
\]
and
\[
\intf{\greenCupDiagram}^{(\mathrm{B2}')}(1)
=Z_{0,2}^0(1)
=\sum_{r=0}^3 t_r^0 e_r^{\otimes 2}
=\sum_{r=0}^3 e_r\otimes e_r.
\]

\item \textbf{(S3'R).}
Evaluating on \((1)\) on both sides,

Note that \((H)^T=H\) \& \((H)^2=I\).
\[
\begin{aligned}
\intf{\redCupDiagram}^{(\mathrm{B2}')}(1)
&=X_{0,2}^0(1)\\
&=(H\otimes H)Z_{0,2}^0(1)\\
&=(H\otimes H)\sum_{r=0}^3 t_r^0 e_r^{\otimes 2}\\
&=(H\otimes H)\eta(1)\\
&=(I\otimes H(H)^T)\eta(1)\\
&=(I\otimes I)\eta(1)=\eta(1)\\
&=\intf{\cupDiagram}^{(\mathrm{B2}')}(1).
\end{aligned}
\]

\item \textbf{(B1).}
We use as shorthand:
\[
\rho:=X_{0,1}^0,\qquad
u:=Z_{0,1}^0=\sum_{r=0}^3 e_r,\qquad
\bar u:=Z_{1,0}^0,\qquad
\Delta:=Z_{1,2}^0.
\]
Note that the row sums of \(2H\) are \((4,0,0,0)\), so \((2H)\circ u=(4,0,0,0)\).

Now,
\[
\rho=H\circ u=\frac{1}{2}(2H)\circ u
=\frac{1}{2}(4,0,0,0)=2\cdot e_0.
\]
Then,
\[
(\bar u\circ\rho)=2(\bar u\cdot e_0)=2.
\]
And,
\[
(\Delta\rho)=2(\Delta e_0)=2(e_0\otimes e_0).
\]
Therefore,
\[
\intf{\bOneLhsDiagram}^{(\mathrm{B2}')}=4(e_0\otimes e_0),
\]
\[
\intf{\bOneRhsDiagram}^{(\mathrm{B2}')}=\rho\otimes\rho=4(e_0\otimes e_0).
\]

\item \textbf{(ZO').}
\[
Z_{0,0}^2=0,
\]
so
\[
\intf{\zOLhsDiagram}^{(\mathrm{B2}')}=0=\intf{\zORhsDiagram}^{(\mathrm{B2}')}.
\]

\item \textbf{(H).}
Note that \((H)^2=I\).
\[
\begin{aligned}
\intf{\hLhsDiagram}^{(\mathrm{B2}')}
&=(H)^{\otimes m}\circ X_{n,m}^k\circ (H)^{\otimes n}\\
&=(H)^{\otimes m}\circ (H)^{\otimes m}\circ
Z_{n,m}^k\circ (H)^{\otimes n}\circ (H)^{\otimes n}\\
&=Z_{n,m}^k
=\intf{\hRhsDiagram}^{(\mathrm{B2}')}.
\end{aligned}
\]

\item \textbf{(IV').}
Consider each ``dumbbell''
\[
\chi:=Z_{3,0}^0\circ X_{0,3}^0.
\]
Let \(H=(h_{ab})_{a,b}\).
Then
\[
\chi =
\sum_{b=0}^3\sum_{a=0}^3 h_{ab}^3
=8+(10+5\delta)+(10+5\delta)+(10+5\delta)
=3
\]
in \(R\).
Also,
\[
Z_{0,0}^0=\sum_{r=0}^3 t_r^0=1+1+1+1=4.
\]
So,
\[
\intf{\ivLhsDiagram}^{(\mathrm{B2}')}=4\cdot 3\cdot 3=36=1,
\]
\[
\intf{\ivRhsDiagram}^{(\mathrm{B2}')}=1.
\]

\item \textbf{(B2').}
Now, let
\[
F:=X^0_{2,1}
=
H\circ Z^0_{2,1}\circ(H\otimes H).
\]
We compute \(F(e_1\otimes e_1)\). First,
\[
He_1 = 3e_0 + (2+2\delta)e_1 +3e_2 + (2+3\delta)e_3.
\]
Thus,
\[
Z^0_{2,1}(He_1\otimes He_1)
=
3^2e_0+(2+2\delta)^2e_1+3^2e_2+(2+3\delta)^2e_3
\]
\[
=
4e_0+(4+3\delta)e_1+4e_2+(4+2\delta)e_3.
\]
Applying \(H\), we get
\[
F(e_1\otimes e_1)
=3e_0+3\delta e_2+2\delta e_3.
\]
(note that \(\sigma\) does nothing to \(e_1\otimes e_1\)).
\[
\begin{aligned}
\intf{\bTwoLhsDiagram}^{(\mathrm{B2}')}(e_1\otimes e_1)
&=Z_{0,0}^0\otimes F(e_1\otimes e_1)\otimes F(e_1\otimes e_1)\\
&=4(F(e_1\otimes e_1)\otimes F(e_1\otimes e_1))\\
&=4\big((3e_0+3\delta e_2+2\delta e_3)\otimes(3e_0+3\delta e_2+2\delta e_3)\big)
\end{aligned}
\]
Comparing the \(e_0\otimes e_2\) coefficient, we get
\[
4\cdot 3\cdot 3\delta = 36\delta=\delta
\quad\text{in } R.
\]
Now,
\[
X_{0,1}^0(1)=H(e_0+e_1+e_2+e_3)=2e_0,
\]
so,
\[
Z_{1,0}^0\circ X_{0,1}^0(1)=2.
\]
Thus,
\[
\begin{aligned}
\intf{\bTwoRhsDiagram}^{(\mathrm{B2}')}(e_1\otimes e_1)
&=2\cdot Z_{1,2}^0\circ F(e_1\otimes e_1)\\
&=2Z_{1,2}^0(3e_0+3\delta e_2+2\delta e_3)\\
&=e_0\otimes e_0+\delta e_2\otimes e_2+4\delta e_3\otimes e_3.
\end{aligned}
\]
Thus the \(e_0\otimes e_2\) coefficient on the right-hand side is \(0\).
But, \(\delta \neq 0\) in \(R\), so the two sides do not agree.
\[
\therefore\quad (\mathrm{B2}')\ \text{fails under}\ \intf{-}^{(\mathrm{B2}')}.
\]

\item \textbf{(EU').}
Note that \(Z_{2,1}^0\) just multiplies matching coordinates.

For vectors \(u,v\in V\):
\[
Z_{2,1}^0(u\otimes v)=\sum_{r=0}^3 u_rv_r e_r.
\]
Define \(D=\diag(t_0,t_1,t_2,t_3)\), the diagonal matrix of our phases:
\[
t=
\begin{pmatrix}
t_0\\ t_1\\ t_2\\ t_3
\end{pmatrix}
=
\begin{pmatrix}
1\\ 3+2\delta\\ 3+3\delta\\ 4
\end{pmatrix},
\qquad
t^{-1}=
\begin{pmatrix}
1\\ 2+2\delta\\ 2+3\delta\\ 4
\end{pmatrix}.
\]
\[
Ht^{-1}=
\begin{pmatrix}
2\\ 1+4\delta\\ 1+\delta\\ 3
\end{pmatrix}
=2
\begin{pmatrix}
1\\ 3+2\delta\\ 3+3\delta\\ 4
\end{pmatrix}
=2t
\quad\Longrightarrow\quad
\diag(Ht^{-1})=2D.
\]
Then,
\[
Z_{1,1}^1=D
\quad\&\quad
Z_{0,1}^{-1}=t^{-1}.
\]
So,
\[
\begin{aligned}
\intf{\euRhsDiagram}^{(\mathrm{B2}')}
&=Z_{1,1}^1\circ X_{2,1}^0\circ
\big(Z_{1,1}^1\otimes Z_{0,1}^{-1}\big)\\
&=D\circ H\circ Z_{2,1}^0\circ (H\otimes H)\circ(D\otimes t^{-1})\\
&=D\circ H\circ Z_{2,1}^0\circ(HD\otimes Ht^{-1})\\
&=D\circ H\circ \diag(Ht^{-1})\circ H\circ D\\
&=2(DHDHD)=18(D(2H)D(2H)D)=3(2H)=H,
\end{aligned}
\]
where \(D(2H)D(2H)D=2H\) by direct computation and \(18=3\) in \(R\).
\[
\therefore\quad
\intf{\euLhsDiagram}^{(\mathrm{B2}')}
=\intf{\euRhsDiagram}^{(\mathrm{B2}')}.
\]
\end{itemize}
\end{proof}

\clearpage
\subsection{The Bialgebra Matrices}
\label{app:bialgebra-matrices}

Here we present the \(16\times16\) matrices for both sides of the bialgebra interpretation, referenced in the proof of \cref{thm:b2-necessary}, and the derivation of our bialgebra interpretation from \cref{sec:bialgebra-rule}.

\begingroup
\setcounter{MaxMatrixCols}{16}
\newcommand{\matzero}{\textcolor{black!25}{0}}
\newcommand{\redbox}[1]{{\setlength{\fboxsep}{1pt}\color{red}\boxed{\color{black}#1}}}
\begin{center}
\resizebox{\linewidth}{!}{$
M_{\mathrm{LHS}}=
\begin{pmatrix}
1&\matzero&\matzero&\matzero&\matzero&1&\matzero&\matzero&\matzero&\matzero&1&\matzero&\matzero&\matzero&\matzero&1\\
\matzero&\matzero&\matzero&\matzero&\matzero&\matzero&\matzero&\matzero&\matzero&\matzero&4\lambda\delta&\matzero&\matzero&\matzero&\matzero&\lambda\delta\\
\matzero&\matzero&\matzero&\matzero&\matzero&\redbox{\lambda\delta}&\matzero&\matzero&\matzero&\matzero&\matzero&\matzero&\matzero&\matzero&\matzero&4\lambda\delta\\
\matzero&\matzero&\matzero&\matzero&\matzero&4\lambda\delta&\matzero&\matzero&\matzero&\matzero&\lambda\delta&\matzero&\matzero&\matzero&\matzero&\matzero\\
\matzero&\matzero&\matzero&\matzero&\matzero&\matzero&\matzero&\matzero&\matzero&\matzero&4\lambda\delta&\matzero&\matzero&\matzero&\matzero&\lambda\delta\\
\matzero&1&\matzero&\matzero&1&\matzero&\matzero&\matzero&\matzero&\matzero&\matzero&1&\matzero&\matzero&1&\matzero\\
\matzero&\matzero&\matzero&\matzero&\matzero&\matzero&\matzero&4\lambda\delta&\matzero&\matzero&\matzero&\lambda\delta&\matzero&4\lambda\delta&\lambda\delta&\matzero\\
\matzero&\matzero&\matzero&\matzero&\matzero&\matzero&\lambda\delta&\matzero&\matzero&\lambda\delta&\matzero&4\lambda\delta&\matzero&\matzero&4\lambda\delta&\matzero\\
\matzero&\matzero&\matzero&\matzero&\matzero&\lambda\delta&\matzero&\matzero&\matzero&\matzero&\matzero&\matzero&\matzero&\matzero&\matzero&4\lambda\delta\\
\matzero&\matzero&\matzero&\matzero&\matzero&\matzero&\matzero&4\lambda\delta&\matzero&\matzero&\matzero&\lambda\delta&\matzero&4\lambda\delta&\lambda\delta&\matzero\\
\matzero&\matzero&1&\matzero&\matzero&\matzero&\matzero&1&1&\matzero&\matzero&\matzero&\matzero&1&\matzero&\matzero\\
\matzero&\matzero&\matzero&\matzero&\matzero&\matzero&4\lambda\delta&\lambda\delta&\matzero&4\lambda\delta&\matzero&\matzero&\matzero&\lambda\delta&\matzero&\matzero\\
\matzero&\matzero&\matzero&\matzero&\matzero&4\lambda\delta&\matzero&\matzero&\matzero&\matzero&\lambda\delta&\matzero&\matzero&\matzero&\matzero&\matzero\\
\matzero&\matzero&\matzero&\matzero&\matzero&\matzero&\lambda\delta&\matzero&\matzero&\lambda\delta&\matzero&4\lambda\delta&\matzero&\matzero&4\lambda\delta&\matzero\\
\matzero&\matzero&\matzero&\matzero&\matzero&\matzero&4\lambda\delta&\lambda\delta&\matzero&4\lambda\delta&\matzero&\matzero&\matzero&\lambda\delta&\matzero&\matzero\\
\matzero&\matzero&\matzero&1&\matzero&\matzero&1&\matzero&\matzero&1&\matzero&\matzero&1&\matzero&\matzero&\matzero
\end{pmatrix}.
$}

\medskip

\resizebox{\linewidth}{!}{$
M_{\mathrm{RHS}}=
\begin{pmatrix}
1&\matzero&\matzero&\matzero&\matzero&1&\matzero&\matzero&\matzero&\matzero&1&\matzero&\matzero&\matzero&\matzero&1\\
\matzero&\matzero&\matzero&\matzero&\matzero&\matzero&\matzero&\matzero&\matzero&\matzero&\matzero&\matzero&\matzero&\matzero&\matzero&\matzero\\
\matzero&\matzero&\matzero&\matzero&\matzero&\redbox{\matzero}&\matzero&\matzero&\matzero&\matzero&\matzero&\matzero&\matzero&\matzero&\matzero&\matzero\\
\matzero&\matzero&\matzero&\matzero&\matzero&\matzero&\matzero&\matzero&\matzero&\matzero&\matzero&\matzero&\matzero&\matzero&\matzero&\matzero\\
\matzero&\matzero&\matzero&\matzero&\matzero&\matzero&\matzero&\matzero&\matzero&\matzero&\matzero&\matzero&\matzero&\matzero&\matzero&\matzero\\
\matzero&1&\matzero&\matzero&1&\matzero&\lambda\delta&4\lambda\delta&\matzero&\lambda\delta&4\lambda\delta&1&\matzero&4\lambda\delta&1&\lambda\delta\\
\matzero&\matzero&\matzero&\matzero&\matzero&\matzero&\matzero&\matzero&\matzero&\matzero&\matzero&\matzero&\matzero&\matzero&\matzero&\matzero\\
\matzero&\matzero&\matzero&\matzero&\matzero&\matzero&\matzero&\matzero&\matzero&\matzero&\matzero&\matzero&\matzero&\matzero&\matzero&\matzero\\
\matzero&\matzero&\matzero&\matzero&\matzero&\matzero&\matzero&\matzero&\matzero&\matzero&\matzero&\matzero&\matzero&\matzero&\matzero&\matzero\\
\matzero&\matzero&\matzero&\matzero&\matzero&\matzero&\matzero&\matzero&\matzero&\matzero&\matzero&\matzero&\matzero&\matzero&\matzero&\matzero\\
\matzero&\matzero&1&\matzero&\matzero&\lambda\delta&4\lambda\delta&1&1&4\lambda\delta&\matzero&\lambda\delta&\matzero&1&\lambda\delta&4\lambda\delta\\
\matzero&\matzero&\matzero&\matzero&\matzero&\matzero&\matzero&\matzero&\matzero&\matzero&\matzero&\matzero&\matzero&\matzero&\matzero&\matzero\\
\matzero&\matzero&\matzero&\matzero&\matzero&\matzero&\matzero&\matzero&\matzero&\matzero&\matzero&\matzero&\matzero&\matzero&\matzero&\matzero\\
\matzero&\matzero&\matzero&\matzero&\matzero&\matzero&\matzero&\matzero&\matzero&\matzero&\matzero&\matzero&\matzero&\matzero&\matzero&\matzero\\
\matzero&\matzero&\matzero&\matzero&\matzero&\matzero&\matzero&\matzero&\matzero&\matzero&\matzero&\matzero&\matzero&\matzero&\matzero&\matzero\\
\matzero&\matzero&\matzero&1&\matzero&4\lambda\delta&1&\lambda\delta&\matzero&1&\lambda\delta&4\lambda\delta&1&\lambda\delta&4\lambda\delta&\matzero
\end{pmatrix}.
$}
\end{center}
\endgroup

Taking \(\lambda=1\) gives the exact matrix obtained in the interpretation.
The \((e_1\otimes e_1,e_0\otimes e_2)\) coordinate is highlighted in red.

\end{document}